\newtheorem{theorem}{Theorem}
\newtheorem{lemma}[theorem]{Lemma}
\newtheorem{proposition}[theorem]{Proposition}
\newenvironment{proof}[1][Proof]{\textbf{#1.} }{\ \rule{0.5em}{0.5em}}
\begin{document}

\title{The Stratonovich Formulation of Quantum Feedback Network Rules}
\author{John Gough}
\email{jug@aber.ac.uk}
\date{\today}

\affiliation{Department of Physics, Aberystwyth University, SY23 3BZ, Wales,
United Kingdom}

\begin{abstract}
We express the rules for forming quantum feedback networks using the Stratonovich form of quantum stochastic calculus rather than the It\={o}, or $SLH$ form. Remarkably the feedback reduction rule implies that we obtain the Schur complement of the matrix of Stratonovich coupling operators where we short out the internal input/output coefficients. 
\end{abstract}

\maketitle

\section{Introduction}
Quantum stochastic differential equations \cite{HP,Gardiner} give the mathematical framework for modelling open quantum systems where
one wish to explicitly take account of inputs and outputs, sometimes referred to as the SLH framework. There are explicit rules for forming a
quantum feedback network where various outputs from component systems are fed back in as inputs elsewhere in the network \cite{GJ_QFN,GJ_Series}. The techniques have since been widely applied to model quantum optical systems both theoretically and experimentally \cite{I_12}-\cite{ZJLYXP_2015}. The following paper gives the formulation of the rules in terms of the Stratonovich form of the quantum stochastic calculus
\cite{Gardiner,Chebotarev,G_Wong-Zakai}. The Stratonovich form is closer to a Hamiltonian description, and has previously shown revealed
some interesting representations of open quantum systems \cite{G_JMP_2015,G_OSID_2015}.
The reformulation the quantum feedback connection rules for the Stratonovich representation will of course follow as an exercise in converting from the existing It\={o} rules, however, the rather surprising result is that the resulting rule that emerges is the direct Schur complementation procedure. This is somewhat surprising as there is no obvious simple algebraic elimination of internal inputs and outputs
going on at the level of the Hamiltonian. Also, the simplest network consisting of two systems in series is not encouraging when analyzed in terms of the Stratonovich form. Nevertheless the rule that emerges is remarkable direct and succinct. It is possible that the Stratonovich form of the feedback rule may have some numerical advantages in modelling quantum feedback networks, for instance using computational packages such as QHDL\cite{QHDL}.

\subsection{SLH Framework}
In the Markovian model of an open quantum system we consider a fixed Hilbert
space $\mathfrak{h}_{0}$ for the system and a collection of independent
quantum white noises $b_{k}\left( t\right) $ labeled by $k$ belonging to
some discrete set $\mathsf{k}=\left\{ 1,\cdots ,n\right\} $, that is, we
have 
\begin{eqnarray*}
\left[ b_{j}\left( t\right) ,b_{k}\left( s\right) ^{\ast }\right] =\delta
_{jk}\delta \left( t-s\right) .
\end{eqnarray*}
The Schr\"{o}dinger equation is 
\begin{eqnarray}
\dot{U}\left( t\right) =-i\Upsilon \left( t\right) \,U\left( t\right) 
\label{eq:Schrodinger}
\end{eqnarray}
where the stochastic Hamiltonian takes the form 
\begin{eqnarray}
\Upsilon \left( t\right) =E_{00}+\sum_{j\in \mathsf{k}}E_{j0}b_{j}\left(
t\right) ^{\ast }+\sum_{k\in \mathsf{k}}E_{0k}\ b_{k}\left( t\right)
+\sum_{j,k\in \mathsf{k}}E_{jk}b_{j}\left( t\right) ^{\ast }b_{k}\left(
t\right) .  \label{eq:Upsilon}
\end{eqnarray}
Here we assume that the $E_{\alpha \beta }$ are operators on $\mathfrak{h}%
_{0}$ with $E_{\alpha \beta }^{\ast }=E_{\beta \alpha }$. (In the course of
this paper we will take them to be bounded so as to avoid technical
distractions.) We may write 
\begin{eqnarray*}
\Upsilon \left( t\right) =\left. 
\begin{array}{c}
\left[ 1,b_{\mathsf{k}}\left( t\right) ^{\ast }\right]  \\ 
\quad 
\end{array}
\right. \mathbf{E}\left[ 
\begin{array}{c}
1 \\ 
b_{\mathsf{k}}\left( t\right) 
\end{array}
\right] 
\end{eqnarray*}
where 
\begin{eqnarray*}
b_{\mathsf{k}}\left( t\right) =\left[ 
\begin{array}{c}
b_{1}\left( t\right)  \\ 
\vdots  \\ 
b_{n}\left( t\right) 
\end{array}
\right] ,\quad \mathbf{E}=\left[ 
\begin{array}{cc}
E_{00} & E_{0\mathsf{k}} \\ 
E_{\mathsf{k}0} & E_{\mathsf{kk}}
\end{array}
\right] 
\end{eqnarray*}
with $E_{0\mathsf{k}}=\left[ E_{01},\cdots ,E_{0n}\right] $, $E_{\mathsf{k}%
0}=E_{0\mathsf{k}}^{\ast }$ and $E_{\mathsf{kk}}$ the $n\times n$ matrix
with entries $E_{jk}$ with $j,k\in \mathsf{k}$. The Schr\"{o}dinger equation
(\ref{eq:Schrodinger}) is interpreted as the Stratonovich quantum stochastic
differential equation 
\begin{eqnarray}
dU\left( t\right)  &=&-i\bigg\{E_{00}\otimes dt+\sum_{j\in \mathsf{k}%
}E_{j0}\otimes dB_{j}\left( t\right) ^{\ast } \nonumber \\
&&+\sum_{k\in \mathsf{k}}E_{0k}\otimes dB_{k}\left( t\right) +\sum_{j,k\in 
\mathsf{k}}E_{jk}\otimes d\Lambda _{jk}\left( t\right) \bigg\}\circ U\left(
t\right) ,
\label{eq:S_QSDE}
\end{eqnarray}
which may be readily converted into the quantum It\={o} form of Hudson and
Parthasarathy. (In fact the latter is accomplished by Wick ordering the
noise fields $b_{k}(t)$ and $b_{k}^{\ast }(t)$ in (\ref{eq:Schrodinger}).)

In the theory of quantum feedback networks, we consider interconnected open
Markovian models in the limit where the connections have zero time delay. If
we have several components, separately described by stochastic Hamiltonians $%
\Upsilon ^{\left( i\right) }\left( t\right) $ for $i=1,\cdots ,m$, then the
natural Hamiltonian describing all components would be
\begin{eqnarray}
\Upsilon \left( t\right)  =\sum_{i=1}^{m}\Upsilon ^{\left( i\right)
}\left( t\right)  
=\sum_{i=1}^{m}\left. 
\begin{array}{c}
\left[ 1,b_{\mathsf{k}\left( i\right) }\left( t\right) ^{\ast }\right]  \\ 
\quad 
\end{array}
\right. \mathbf{E}^{\left( i\right) }\left[ 
\begin{array}{c}
1 \\ 
b_{\mathsf{k}\left( i\right) }\left( t\right) 
\end{array}
\right] 
\label{eq:S_parallel}
\end{eqnarray}
where $\mathsf{k}\left( i\right) $ is the set of input labels for the $i$th
component, and the coupling terms are by $\mathbf{E}^{\left( i\right)
}=\left[ 
\begin{array}{cc}
E_{00}^{\left( i\right) } & E_{0\mathsf{k}\left( i\right) }^{\left( i\right)
} \\ 
E_{\mathsf{k}\left( i\right) 0}^{\left( i\right) } & E_{\mathsf{k}\left(
i\right) \mathsf{k}\left( i\right) }^{\left( i\right) }
\end{array}
\right] $. (At this stage we make no assumptions about whether the coupling
operators of different components commute or not: mathematically we just
take all $E_{\alpha \beta }^{\left( i\right) }$ to be defined one the same
Hilbert space $\mathfrak{h}_{0}$ describing the all the components
collectively and work at this level of generality.) The total Hamiltonian
is then
\begin{eqnarray*}
\Upsilon ^{\mathrm{parallel}}\left( t\right) =\left. 
\begin{array}{c}
\left[ 1,b_{\mathsf{k}}\left( t\right) ^{\ast }\right]  \\ 
\quad 
\end{array}
\right. \mathbf{E}^{\mathrm{parallel}}\left[ 
\begin{array}{c}
1 \\ 
b_{\mathsf{k}}\left( t\right) 
\end{array}
\right] 
\end{eqnarray*}
where $\mathsf{k}=\mathsf{k}\left( 1\right) \cup \cdots \cup \mathsf{k}\left( m\right) $ is the collection of labels for all components, and 
\begin{eqnarray}
\mathbf{E}^{\mathrm{parallel}}=\left[ 
\begin{array}{cccc}
\sum_{i=1}^{m}E_{00}^{\left( i\right) } & E_{0\mathsf{k}\left( 1\right)
}^{\left( 1\right) } & \cdots  & E_{0\mathsf{k}\left( m\right) }^{\left(
m\right) } \\ 
E_{\mathsf{k}\left( 1\right) 0}^{\left( 1\right) } & E_{\mathsf{k}\left(
1\right) \mathsf{k}\left( 1\right) }^{\left( 1\right) } &  & 0 \\ 
\vdots  &  & \ddots  &  \\ 
E_{\mathsf{k}\left( m\right) 0}^{\left( m\right) } & 0 &  & E_{\mathsf{k}%
\left( m\right) \mathsf{k}\left( m\right) }^{\left( m\right) }
\end{array}
\right] .  
\label{eq:E_parallel}
\end{eqnarray}

This gives us our first network rule - how to assemble the $m$ components
together into a single model before connections are made. The second rule
deals with making connections. Here we must divide our set of fields into
two groups - those that are external and those that are to be fed back in as
internal drives. We denote the label sets for these as $\mathsf{e}$ and $%
\mathsf{i}$ respectively so that the total set $\mathsf{k}$ is the disjoint
union $\mathsf{e}\cup \mathsf{i}$. The Hamiltonian $\Upsilon ^{\mathrm{parallel%
}}\left( t\right) $ may then be decomposed as
\begin{eqnarray*}
\Upsilon ^{\mathrm{parallel}}\left( t\right) =\left. 
\begin{array}{c}
\left[ 1,b_{\mathsf{e}}\left( t\right) ^{\ast },b_{\mathsf{i}}\left(
t\right) ^{\ast }\right]  \\ 
\quad 
\end{array}
\right. \left[ 
\begin{array}{ccc}
E_{00} & E_{0\mathsf{e}} & E_{0\mathsf{i}} \\ 
E_{\mathsf{e}0} & E_{\mathsf{ee}} & E_{\mathsf{ei}} \\ 
E_{\mathsf{i}0} & E_{\mathsf{ie}} & E_{\mathsf{ii}}
\end{array}
\right] \left[ 
\begin{array}{c}
1 \\ 
b_{\mathsf{e}}\left( t\right)  \\ 
b_{\mathsf{i}}\left( t\right) 
\end{array}
\right] .
\end{eqnarray*}
Applying the feedback connections we should obtain a reduced model where the
internal inputs have been eliminated leaving only the set $\mathsf{e}$ of
external fields, that is, we should obtain a stochastic Hamiltonian of the
form
\begin{eqnarray*}
\Upsilon ^{\mathrm{fb}}\left( t\right) =\left. 
\begin{array}{c}
\left[ 1,b_{\mathsf{e}}\left( t\right) ^{\ast }\right]  \\ 
\quad 
\end{array}
\right. \mathbf{E}^{\mathrm{fb}}\left[ 
\begin{array}{c}
1 \\ 
b_{\mathsf{e}}\left( t\right) 
\end{array}
\right] .
\end{eqnarray*}
The expression for the reduced coefficients has been derived for the It\={o}
form, however, the remarkable result presented here is that the feedback
reduction formula for the Stratonovich form is actually the Schur complement
of the matrix $\mathbf{E}^{\mathrm{parallel}}$ describing the open loop
network where we short out the internal blocks. Under the assumption that the matrix $E_{\mathsf{ii}}$ of operators is invertible, we shall show that 
\begin{eqnarray}
\mathbf{E}^{\mathrm{fb}}\equiv \left[ 
\begin{array}{cc}
E_{00}^{\mathrm{fb}} & E_{0\mathsf{e}}^{\mathrm{fb}} \\ 
E_{\mathsf{e}0}^{\mathrm{fb}} & E_{\mathsf{ee}}^{\mathrm{fb}}
\end{array}
\right] =\left[ 
\begin{array}{cc}
E_{00} & E_{0\mathsf{e}} \\ 
E_{\mathsf{e}0} & E_{\mathsf{ee}}
\end{array}
\right] -\left[ 
\begin{array}{c}
E_{0\mathsf{i}} \\ 
E_{\mathsf{ei}}
\end{array}
\right] E_{\mathsf{ii}}^{-1}\left[ 
\begin{array}{cc}
E_{\mathsf{i}0} & E_{\mathsf{ie}}
\end{array}
\right] .
\label{eq:E_fb}
\end{eqnarray}

\subsection{Notation}

Let $\mathfrak{h}$ be a fixed Hilbert space. Given a countable set $\mathsf{j%
}$ of labels we set $\mathbb{C}^{\mathsf{j}}$ which is then a Hilbert space
spanned by a collection of orthonormal vectors $\left\{ e_{j}:j\in \mathsf{j}%
\right\} $, we may take as canonical basis. The Hilbert space $\mathfrak{h}%
\otimes \mathbb{C}^{\mathsf{j}}$ may then be represented as $\oplus _{j\in 
\mathsf{j}}\mathfrak{h}$, that is, as the set of vectors $\Psi =\left[ \psi
_{j}\right] _{j\in \mathsf{j}}$ with each $\psi _{j}\in \mathfrak{h}$ and $%
\sum_{j\in \mathsf{j}}\left\| \psi _{j}\right\| ^{2}<\infty $. An operator $X
$ on $\mathfrak{h}\otimes \mathbb{C}^{\mathsf{j}}$ may likewise we
represented as the array $\left[ X_{jk}\right] _{j,k\in \mathsf{j}}$ where
each $X_{jk}$ is an operator on $\mathfrak{h}$, so that $X\Psi =X\left[ \psi
_{j}\right] _{j\in \mathsf{j}}=\left[ \sum_{k\in \mathsf{j}}X_{jk}\psi _{k}%
\right] _{j\in \mathsf{j}}$.

Let $\mathsf{a}$ be a subset of $\mathsf{j}$. We shall understand that a block matrix $X_{\mathsf{aa}}$ is invertible
(with inverse $Y_{\mathsf{aa}}$) to mean the obvious property that the
system of equations
\begin{equation*}
\sum_{a^{\prime }\in \mathsf{a}}X_{aa^{\prime }}\phi _{a^{\prime }}=\psi
_{a},\qquad \left( \forall \quad a\in \mathsf{a}\right) ,
\end{equation*}
has an unique solution $\left( \phi _{a}\right) _{a\in \mathsf{a}}$ for any
given $\left( \psi _{a}\right) _{a\in \mathsf{a}}$ (given by $\phi
_{a}=\sum_{a^{\prime }\in \mathsf{a}}Y_{aa^{\prime }}\psi _{a^{\prime }}$). 
In particular, we shall write $I_{\mathsf{a}}$ for the identity on $\mathfrak{h}\otimes \mathbb{C}^{\mathsf{a}}$.

 In general if $\mathsf{a}$ and $\mathsf{b}$ are
subsets of the label set $\mathsf{j}$ we may set $X_{\mathsf{ab}}=\left[
X_{ab}\right] _{a\in \mathsf{a},b\in \mathsf{b}}$ which defined as sub-block
operator. If $\mathsf{a}$ and $\mathsf{b}$ are nonempty disjoint subsets
such that $\mathsf{j}=\mathsf{a}\cup \mathsf{b}$ then we may reorder the
operator into sub-blocks as $X\equiv \left[ 
\begin{array}{cc}
X_{\mathsf{aa}} & X_{\mathsf{ab}} \\ 
X_{\mathsf{ba}} & X_{\mathsf{bb}}
\end{array}
\right] $ corresponding to the direct sum decomposition $\mathbb{C}^{\mathsf{%
j}}\cong \mathbb{C}^{\mathsf{a}}\oplus \mathbb{C}^{\mathsf{b}}$. In such
cases we define the Schur complement of $X$ to be 
\begin{eqnarray}
\underset{\mathsf{b}}{\mathrm{Schur}}X\triangleq X_{\mathsf{aa}}-X_{\mathsf{ab}%
}X_{\mathsf{bb}}^{-1}X_{\mathsf{ba}}
\end{eqnarray}
where we shall always assume that $X_{\mathsf{bb}}$ is invertible as an
operator on $\mathfrak{h}\otimes \mathbb{C}^{\mathsf{b}}$. Specifically, we
say that this is the Schur complement of $X$ obtained by shortening the set
of indices $\mathsf{b}$.

A key property that we shall use is that the order in which successive
shortening of indices are applied is not important. In particular, 
\begin{eqnarray*}
\underset{\mathsf{b}_{1}\cup \cdots \cup \mathsf{b}_{n}}{\mathrm{Schur}} \, X=%
\underset{\mathsf{b}_{1}}{\mathrm{Schur}}\cdots \underset{\mathsf{b}_{n}}{%
\mathrm{Schur}}X
\end{eqnarray*}
for any disjoint sets $\mathsf{b}_{1},\cdots ,\mathsf{b}_{n}\subset \mathsf{j%
}$.

\section{Quantum Feedback Networks}

\bigskip

\subsection{Quantum Stochastic Evolutions}

We recall the Hudson-Parthasarathy theory of quantum stochastic evolutions\cite{HP,partha}
on Hilbert spaces of the form 
\begin{eqnarray*}
\mathfrak{H}=\mathfrak{h}_{0}\otimes \Gamma \left( \mathfrak{K}\otimes
L^{2}[0,\infty )\right)
\end{eqnarray*}
where $\mathfrak{h}_{0}$ is a fixed Hilbert space, called the \emph{initial
space}, and $\mathfrak{K}$ is a fixed Hilbert space called the \emph{%
internal space}. Specifically, $\mathfrak{K}$ is the multiplicity space
(also known as the color space) of the noise and takes the form 
\begin{eqnarray*}
\mathfrak{K}=\mathbb{C}^{\mathsf{k}}
\end{eqnarray*}
where $\mathfrak{K}=\left\{ 1,\cdots ,n\right\} $ are the labels of the
input noise fields driving the open quantum system with Hilbert space $%
\mathfrak{h}_{0} $. As before, let $\left\{ e_{k}:k\in \mathsf{k}\right\} $
be the canonical orthonormal basis for $\mathfrak{K}$. The annihilation
processes are defined, for each $k\in \mathsf{k}$, by 
\begin{eqnarray*}
B_{k}\left( t\right) \triangleq a\left( e_{k}\otimes 1_{\left[ 0,t\right]
}\right)
\end{eqnarray*}
where $a\left( \cdot \right) $ is the annihilation functor from $\mathfrak{K}%
\otimes L^{2}[0,\infty )$ to the Fock space $\Gamma \left( \mathfrak{K}%
\otimes L^{2}[0,\infty )\right) $ and $1_{\left[ 0,t\right] }$ is the
indicator function for the interval $\left[ 0,t\right] $. Its adjoint $%
B_{k}\left( t\right) ^{\ast }$ is the creation process, and scattering is
described by the processes 
\begin{eqnarray*}
\Lambda _{jk}\left( t\right) \triangleq d\Gamma \left( |e_{j}\rangle \langle
e_k|\otimes \pi _{\left[ 0,t\right] }\right)
\end{eqnarray*}
where $d\Gamma \left( \cdot \right) $ is the differential second
quantization functor and $\pi _{\left[ 0,t\right] }$ is the operator of
pointwise multiplication by $1_{\left[ 0,t\right] }$ on $L^{2}[0,\infty )$.

As is well known $\mathfrak{H}$ decomposes as $\mathfrak{H}_{\left[ 0,t%
\right] }\otimes \mathfrak{H}_{\left( t,\infty \right) }$ for each $t>0$
where $\mathfrak{H}_{\left[ 0,t\right] }=\mathfrak{h}_{0}\otimes \Gamma
\left( \mathfrak{k}\otimes L^{2}[0,t)\right) $ and $\mathfrak{H}_{\left(
t,\infty \right) }=\Gamma \left( \mathfrak{k}\otimes L^{2}(t,\infty )\right) 
$. We shall write $\mathfrak{A}_{t]}$ for the space of operators on $%
\mathfrak{H}$ that act trivially on the future component $\mathfrak{H}%
_{\left( t,\infty \right) }$. A quantum stochastic process $X_{t}=\left\{
X_{t}:t\geq 0\right\} $ is said to be \emph{adapted} if $X_{t}\in %
\mathfrak{A}_{t]}$ for each $t\geq 0$.

Taking $\left\{ x_{\alpha \beta }\left( t\right) :t\geq 0\right\} $ to be a
family of adapted quantum stochastic processes, their quantum stochastic
integral is $X_{t}=\int_{0}^{t}x_{\alpha \beta }\left( s\right) dB^{\alpha
\beta }\left( t\right) $ which is shorthand for 
\begin{eqnarray*}
\int_{0}^{t}x_{00}\left( s\right) ds+\sum_{j\in \mathsf{j}}\int_{0}^{t} x_{j0}\left(
s\right) dB_{j}\left( s\right) ^{\ast } 
+\sum_{k\in \mathsf{j}} \int_{0}^{t} x_{0k}\left(
s\right) dB_{k}\left( s\right) +\sum_{j,k\in \mathsf{j}} \int_{0}^{t}
x_{jk}\left( s\right) d\Lambda _{jk}\left( s\right)
\end{eqnarray*}
and where the differentials are understood in the It\={o} sense. Given a
similar quantum It\={o} integral $Y_{t}$, with $dY_{t}=y_{\alpha \beta
}\left( t\right) dB^{\alpha \beta }\left( t\right) $, we have the quantum
It\={o} product rule 
\begin{eqnarray}
d\left( X_{t}.Y_{t}\right) =dX_{t}.Y_{t}+X_{t}.dY_{t}+dX_{t}.dY_{t},
\label{Ito formula}
\end{eqnarray}
with the It\={o} correction given by 
\begin{eqnarray}
dX_{t}.dY_{t}=x_{\alpha k}\left( t\right) y_{k\beta }\left( t\right)
\,dB^{\alpha \beta }\left( t\right) .  \label{Ito correction}
\end{eqnarray}

The coefficients $\left\{ x_{\alpha \beta }\left( t\right) \right\} $ may be
assembled into a matrix 
\begin{eqnarray}
\mathbf{X}\left( t\right) =\left[ 
\begin{tabular}{ll}
$x_{00}\left( t\right) $ & $x_{0\mathsf{k}}\left( t\right) $ \\ 
$x_{\mathsf{k}0}\left( t\right) $ & $x_{\mathsf{kk}}\left( t\right) $%
\end{tabular}
\right] \in \mathfrak{A}_{t]}^{\left( 1+n\right) \times \left( 1+n\right) },
\label{Ito matrix}
\end{eqnarray}
which we term the \emph{It\={o} matrix} for the process. Note that in terms
of our earlier conventions, the appropriate index set is the $0\cup \mathsf{k%
}$ which has $1+n$ elements.

(Here we use the convention that $x_{0\mathsf{k}}\left( t\right) $\ denotes
the row vector with entries $\left( x_{0j}\left( t\right) \right) _{j=1}^{n}$%
, etc. The It\={o} matrix for a product $X_{t}Y_{t}$ of quantum It\={o}
integrals will then be given by $\mathbf{X}\left( t\right) Y\left( t\right)
+X\left( t\right) \mathbf{Y}\left( t\right) +\mathbf{X}\left( t\right) 
\mathbf{\hat{\delta}Y}\left( t\right) $, where $\mathbf{\hat{\delta}}%
\triangleq \left[ 
\begin{tabular}{l|l}
$0$ & $0$ \\ \hline
$0$ & $I_{\mathsf{k}}$%
\end{tabular}
\right] $.

The general form of the constant operator-coefficient quantum stochastic
differential equation for an adapted unitary process $U$ is 
\begin{eqnarray}
dU\left( t\right) = \bigg\{ -\left( \frac{1}{2}L_{\mathsf{k}}^{\ast }L_{%
\mathsf{k}}+iH\right) dt+\sum_{j\in \mathsf{k}}L_{j}dB_{j}\left( t\right)
^{\ast }  -\sum_{j,k\in \mathsf{k}}S_{jk}L_{k}dB_{k}\left( t\right) +\sum_{j,k\in 
\mathsf{k}}(S_{jk}-\delta _{jk})d\Lambda _{jk}\left( t\right) \bigg\} %
U\left( t\right)
\label{eq:Ito_QSDE}
\end{eqnarray}
where the $S_{jk},L_{j}$ and $H$ are operators on the initial Hilbert space
with $S_{\mathsf{kk}}=\left[ S_{jk}\right] _{j,k\in \mathsf{k}}$ is unitary
and $H$ self-adjoint. (We use the convention that $L_{\mathsf{k}}=\left[
L_{k}\right] _{k\in \mathsf{k}}$ and that \ $L_{\mathsf{k}}^{\ast }L_{%
\mathsf{k}}=\sum_{k\in \mathsf{k}}L_{k}^{\ast }L_{k}$.) The corresponding
It\={o} matrix of coefficients is 
\begin{eqnarray}
\mathbf{G}=\left[ 
\begin{array}{cc}
-\left( \frac{1}{2}L_{\mathsf{k}}^{\ast }L_{\mathsf{k}}+iH\right) & -S_{%
\mathsf{kk}}L_{\mathsf{k}}^{\ast } \\ 
L_{\mathsf{k}} & S_{\mathsf{kk}}-I_{\mathsf{k}}
\end{array}
\right] ,
\end{eqnarray}
which we have previously called the It\={o} generator matrix. The triple $%
\left( S,L,H\right) $ are termed the Hudson-Parthasarathy parameters of the
open system evolution. Explicitly, we have that\cite{G_Wong-Zakai}
\begin{eqnarray*}
S_{\mathsf{kk}} =\frac{I_{\mathsf{k}}-\frac{i}{2}E_{\mathsf{kk}}}{I_{%
\mathsf{k}}+\frac{i}{2}E_{\mathsf{kk}}},\quad L_{\mathsf{k}}=-i\left( I_{%
\mathsf{k}}+\frac{i}{2}E_{\mathsf{kk}}\right) ^{-1}E_{\mathsf{k}0}, \quad
H =E_{00}+\frac{1}{2}\mathrm{Im}\left\{ E_{0\mathsf{k}}\left( I_{\mathsf{k}}+%
\frac{i}{2}E_{\mathsf{kk}}\right) ^{-1}E_{\mathsf{k}0}\right\}
\end{eqnarray*}
where Im$X$ means $\frac{1}{2i}\left( X-X^{\ast }\right) $. Note that we may invert to get
\begin{eqnarray}
E_{\mathsf{kk}} = \frac{2}{i} \frac{I_{\mathsf{k}}-S_{\mathsf{kk}}}{I_{%
\mathsf{k}}+S_{\mathsf{kk}}},\label{eq:E_from_S}
\end{eqnarray}
provided that $I_{\mathsf{k}}+S_{\mathsf{kk}}$ is invertible. We note that there will exist SLH models that do not
possess Stratonovich representations. A simple example is an optical mirror for which $S \equiv -1$.

\begin{figure}[htbp]
\centering
\includegraphics[width=0.40\textwidth]{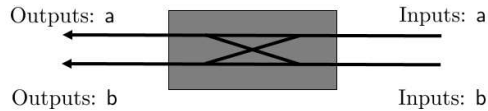}
\caption{A single SLH component with input labels $\mathsf{k}$ split into
two groups $\mathsf{a}$ and $\mathsf{b}$.}
\label{fig:SS_single_component}
\end{figure}

In Figure \ref{fig:SS_single_component} we have partitions the inputs and
outputs into two groups. Here we have the block partition 
\begin{eqnarray*}
S_{\mathsf{kk}}\equiv \left[ 
\begin{array}{cc}
S_{\mathsf{aa}} & S_{\mathsf{ab}} \\ 
S_{\mathsf{ba}} & S_{\mathsf{bb}}
\end{array}
\right] ,\quad L_{\mathsf{k}}\equiv \left[ 
\begin{array}{c}
L_{\mathsf{a}} \\ 
L_{\mathsf{b}}
\end{array}
\right] .
\end{eqnarray*}
Note that we did not need to group the outputs in the same way as the inputs.

\subsubsection{Belavkin-Holevo Matrix Representation}
As is well-known, the Heisenberg group gives a matrix representation of the Lie algebra of the usual canonical commutation relations,
specifically in terms of upper triangular matrices. Independently, Belavkin \cite{Belavkin_1988} and Holevo
\cite{Holevo_1989} developed the analogous representation in the setting for quantum stochastic calculus.
We consider the mapping from It\={o} matrices $\mathbf{X}\in \mathfrak{A}%
^{\left( 1+n\right) \times \left( 1+n\right) }$ to associated \textit{%
Belavkin-Holevo matrices} 
\begin{eqnarray}
\mathbb{X}=\mathscr{H}\left( \left[ 
\begin{tabular}{ll}
$x_{00}$ & $x_{0\mathsf{k}}$ \\ 
$x_{\mathsf{k}0}$ & $x_{\mathsf{kk}}$%
\end{tabular}
\right] \right) =\left[ 
\begin{array}{ccc}
0 & x_{0\mathsf{k}} & x_{00} \\ 
0 & x_{\mathsf{kk}} & x_{\mathsf{k}0} \\ 
0 & 0 & 0
\end{array}
\right] \in \mathfrak{A}^{\left( 1+n+1\right) \times \left( 1+n+1\right) },
\label{Belavkin matrix}
\end{eqnarray}
understood for each fixed time $t$. The Belavkin-Holevo matrices are $1+n+1$ square
dimensional and we shall write their label set as 
\begin{eqnarray*}
\mathsf{j}=\overline{0}\cup \mathsf{k}\cup \underline{0},
\end{eqnarray*}
that is, $\overline{0}$ labels the top row/column and $\underline{0}$ labels
the bottom row/column. We also introduce 
\begin{eqnarray*}
\begin{array}{cc}
\mathbb{I}\triangleq \left[ 
\begin{array}{ccc}
1 & 0 & 0 \\ 
0 & I_{\mathsf{k}} & 0 \\ 
0 & 0 & 1
\end{array}
\right] , & \mathbb{J}\triangleq \left[ 
\begin{array}{ccc}
0 & 0 & 1 \\ 
0 & I_{\mathsf{k}} & 0 \\ 
1 & 0 & 0
\end{array}
\right] .
\end{array}
\end{eqnarray*}
The \textit{twisted involution} on the set of Belavkin-Holevo matrices is defined
by 
\begin{eqnarray*}
\mathbb{X}^{\star }\triangleq \mathbb{JX}^{\dag }\mathbb{J}
\end{eqnarray*}
We have the following properties: 
\begin{eqnarray*}
\mathscr{H}\left( \mathbf{X}^{\dag }\right) =\mathscr{H}\left( \mathbf{X}%
\right) ^{\star }\mathrm{, and } \, \mathscr{H}\left( \mathbf{XPY}\right) =\mathscr{H}%
\left( \mathbf{X}\right) \mathscr{H}\left( \mathbf{Y}\right) \mathrm{.}
\end{eqnarray*}
The main advantage of using this representation is that the It\={o}
correction $\mathbf{XPY}$ can now be given as just the ordinary product $%
\mathbb{XY}$\ of the Belavkin-Holevo matrices.

The Belavkin-Holevo matrix associated with the It\={o} generating matrix is then 
\begin{eqnarray*}
\mathbb{G}=\mathscr{H}\left( \mathbf{G}\right) =\left[ 
\begin{array}{ccc}
0 & -S_{\mathsf{kk}}L_{\mathsf{k}}^{\ast } & -\left( \frac{1}{2}L_{\mathsf{k}%
}^{\ast }L_{\mathsf{k}}+iH\right) \\ 
0 & S_{\mathsf{kk}}-I_{\mathsf{k}} & L_{\mathsf{k}} \\ 
0 & 0 & 0
\end{array}
\right]
\end{eqnarray*}
The related matrix 
\begin{eqnarray}
\mathbb{V}=\mathbb{I}+\mathbb{G=}\left[ 
\begin{array}{ccc}
1 & -S_{\mathsf{kk}}L_{\mathsf{k}}^{\ast } & -\left( \frac{1}{2}L_{\mathsf{k}%
}^{\ast }L_{\mathsf{k}}+iH\right) \\ 
0 & S_{\mathsf{kk}} & L_{\mathsf{k}} \\ 
0 & 0 & 1
\end{array}
\right] 
\end{eqnarray}
is $\star $-unitary, that is 
\begin{eqnarray}
\mathbb{VV}^{\star }=\mathbb{I}=\mathbb{V}^{\star }\mathbb{V}.
\end{eqnarray}

\subsubsection{The Stratonovich Form}

One can given the Stratonovich form of the unitary dynamics. Here we define
the Stratonovich integral by the algebraic relation $dX\left( t\right) \circ
Y\left( t\right) \equiv dX\left( t\right) \,Y\left( t\right) +\frac{1}{2}%
dX\left( t\right) dY\left( t\right) $ however this agrees with the notion of
a mid-point rule \cite{Chebotarev}. For the unitary QSDE (\ref{eq:Ito_QSDE}) we have the
Stratonovich form (\ref{eq:S_QSDE}) which we may write as
\begin{eqnarray*}
dU\left( t\right) =-idE\left( t\right) \circ U\left( t\right)
\end{eqnarray*}
end where 
\begin{eqnarray*}
dE\left( t\right) =E_{00}dt+\sum_{j\in \mathsf{j}}E_{j0}dB_{j}\left(
t\right) ^{\ast }+\sum_{k\in \mathsf{j}}E_{0k}dB_{k}\left( t\right)
+\sum_{j,k\in \mathsf{j}}E_{jk}d\Lambda _{jk}\left( t\right) .
\end{eqnarray*}
Formally the stochastic Hamiltonian $\Upsilon (t)$ is the derivative of the quantum stochastic integral process $E(t)$.
As outlined above, we assemble these operators into the Stratonovich generator matrix,
\begin{eqnarray*}
\mathbf{E}=\left[ 
\begin{array}{cc}
E_{00} & E_{0\mathsf{k}} \\ 
E_{\mathsf{k}0} & E_{\mathsf{kk}}
\end{array}
\right]
\end{eqnarray*}
and we have the relationship 
\begin{eqnarray*}
\mathbf{G}=-i\mathbf{E}-\frac{i}{2}\mathbf{E\hat{\delta}G.}
\end{eqnarray*}
This is more clearly seen in terms of Belavkin-Holevo matrices. Set 
\begin{eqnarray*}
\mathbb{E}=\mathscr{H} \left( \mathbf{E}\right) =\left[ 
\begin{array}{ccc}
0 & E_{0\mathsf{k}} & E_{00} \\ 
0 & E_{\mathsf{kk}} & E_{\mathsf{k}0} \\ 
0 & 0 & 0
\end{array}
\right]
\end{eqnarray*}
then 
\begin{eqnarray*}
\mathbb{V}=\frac{\mathbb{I}-\frac{i}{2}\mathbb{E}}{\mathbb{I}+\frac{i}{2}%
\mathbb{E}}.
\end{eqnarray*}
In particular, we see that $\mathbb{V}$ is $\star $-unitary if and only if $%
\mathbb{E=E}^{\star }$ which in turn implies that $E_{\alpha \beta }^{\ast
}=E_{\beta \alpha }$ for all $\alpha ,\beta \in 0\cup \mathsf{k}$.

\subsection{Quantum Feedback Networks}

In \cite{GJ_Series} we introduced the series product describing the
situation where one SLH drives another, see Figure \ref{fig:SS_Series}. Here
the output of the first system $\left( S_{1},L_{1},H_{1}\right) $ is fed
forward as the input to the second system $\left( S_{2},L_{2},H_{2}\right) $
and the limit of zero time delay is assumed. (Note that the systems do not
technically have to be distinct and may have the same initial space!) In the
limit, the Hudson-Parthasarathy parameters of the composite system were
shown to be \cite{GJ_Series},\cite{GJ_QFN} 
\begin{eqnarray*}
S_{\mathrm{series}} &=&S_{2}S_{1}, \\
L_{\mathrm{series}} &=&L_{2}+S_{2}L_{1}, \\
H_{\mathrm{series}} &=&H_{1}+H_{2}+\mathrm{Im}\left\{ L_{2}^{\dag
}S_{2}L_{1}\right\} .
\label{eq:series_product}
\end{eqnarray*}

\begin{figure}[tbph]
\centering
\includegraphics[width=0.40\textwidth]{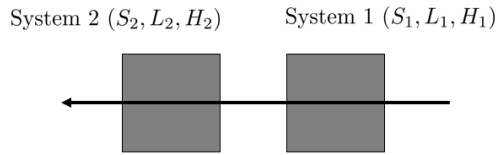}
\caption{Systems in series: Output of system 1 is the input of system 2.}
\label{fig:SS_Series}
\end{figure}
We refer to the associative group law 
\begin{eqnarray*}
\left( S_{2},L_{2},H_{2}\right) \vartriangleleft \left(
S_{1},L_{1},H_{1}\right) \triangleq \left(
S_{2}S_{1},L_{2}+S_{2}L_{1},H_{1}+H_{2}+\mathrm{Im} \{ L_{2}^{\dag
}S_{2}L_{1} \} \right)
\end{eqnarray*}
determined above as the series product. As remarked in \cite{GJ_QFN}, the
series product actually arises natural in Belavkin-Holevo matrix form as 
\begin{eqnarray*}
\mathbb{V}_{\mathrm{series}}=\mathbb{V}_{2}\mathbb{V}_{1}.
\end{eqnarray*}
The product is clearly associative, as one would expect physically, and the
general rule for several systems in series is then $\mathbb{V}_{\mathrm{series}%
}=\mathbb{V}_{n}\cdots \mathbb{V}_{2}\mathbb{V}_{1}$

\bigskip

More generally we gave the rules for construction an arbitrary quantum
feedback network where we have several open quantum components, each
described by a SLH model, and where various outputs of are fed back as
driving inputs with zero time delay, see Figure \ref{fig:SS_network}.

\begin{figure}[htbp]
\centering
\includegraphics[width=0.50\textwidth]{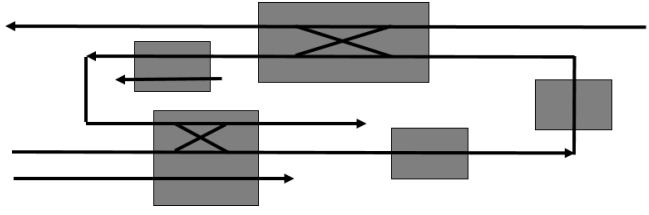}
\caption{A general quantum feedback network.}
\label{fig:SS_network}
\end{figure}

In fact, it suffices to give two rules for constructing arbitrary networks
of this type. The first step is to take the network description and break
all the connections leaving only the individual open loop description. We
can look upon this as a single SLH component, see Figure \ref
{fig:SS_parallel}.

\begin{figure}[htbp]
\centering
\includegraphics[width=0.20\textwidth]{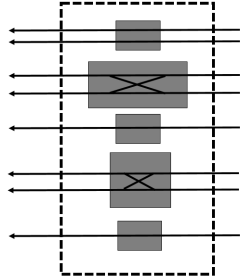}
\caption{The disconnected components in parallel.}
\label{fig:SS_parallel}
\end{figure}

Our first network rule is that the models $\left( S_{j},L_{j},H_{j}\right)
_{j=1}^{n}$ when concatenated in parallel, as sketched in Figure \ref{fig:SS_parallel},
correspond to the single SLH component 
\begin{eqnarray}
\boxplus _{j=1}^{n}\left( S_{j},L_{j},H_{j}\right) =\left( \left[ 
\begin{array}{ccc}
S_{1} & 0 & 0 \\ 
0 & \ddots & 0 \\ 
0 & 0 & S_{n}
\end{array}
\right] ,\left[ 
\begin{array}{c}
L_{1} \\ 
\vdots \\ 
L_{n}
\end{array}
\right] ,H_{1}+\cdots +H_{n}\right) .  \label{eq:SLH_parallel}
\end{eqnarray}

(Note that we have made no assumptions that the operators corresponding to
different components commute!)

The second network rule tells us how to connect the various internal inputs,
see Figure \ref{fig:SS_feedback}. 
\begin{figure}[tbph]
\centering
\includegraphics[width=0.50\textwidth]{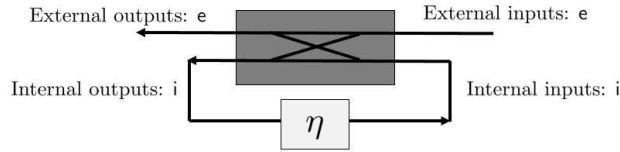}
\caption{Feedback}
\label{fig:SS_feedback}
\end{figure}
To this end we first divide the inputs and outputs into external and
internal groups. That is, the set $\mathsf{k}$ of all inputs is split into $%
\mathsf{e}$ which are the external inputs, and $\mathsf{i}$ which are the
internal. With respect to this decomposition, we have the block partition 
\begin{eqnarray*}
S_{\mathsf{kk}}=\left[ 
\begin{array}{cc}
S_{\mathsf{ee}} & S_{\mathsf{ei}} \\ 
S_{\mathsf{ie}} & S_{\mathsf{ii}}
\end{array}
\right] ,\,L_{\mathsf{k}}=\left[ 
\begin{array}{c}
L_{\mathsf{e}} \\ 
L_{\mathsf{i}}
\end{array}
\right] ,
\end{eqnarray*}
We may for convenience include an adjacency matrix $\eta $ which tells us
which internal input label corresponds to which internal output label. The
second rule then states that the reduced SLH model is 
\begin{eqnarray}
S^{\mathrm{fb}} &=&S_{\mathsf{ee}}+S_{\mathsf{ei}}\eta \left( I_{\mathsf{i}%
}-S_{\mathsf{ii}}\eta \right) ^{-1}S_{\mathsf{ie}},  \notag \\
L^{\mathrm{fb}} &=&L_{\mathsf{e}}+S_{\mathsf{ei}}\eta \left( I_{\mathsf{i}}-S_{%
\mathsf{ii}}\eta \right) ^{-1}L_{\mathsf{i}},  \notag \\
H^{\mathrm{fb}} &=&H+\sum_{i=\mathsf{i},\mathsf{e}}\mathrm{Im}L_{j}^{\dag }\eta
S_{j\mathsf{i}}\left( I_{\mathsf{i}}-S_{\mathsf{ii}}\eta \right) ^{-1}L_{%
\mathsf{i}}.  \label{eq:SLH_feedback}
\end{eqnarray}
We remark that the adjacency matrix is not essential and may be absorbed
into the matrix $S_{\mathsf{kk}}$. Indeed, we may think of the adjacency
matrix as a simple device performing a scattering only and the disconnected
model is $S_{\mathsf{kk}}\left[ 
\begin{array}{cc}
I_{\mathsf{e}} & 0 \\ 
0 & \eta
\end{array}
\right] =\left[ 
\begin{array}{cc}
S_{\mathsf{ee}} & S_{\mathsf{ei}}\eta \\ 
S_{\mathsf{ie}} & S_{\mathsf{ii}}\eta
\end{array}
\right] $ by the series product. As such all we have to do is to replace $S_{%
\mathsf{ei}}\eta $ by $S_{\mathsf{ei}}$ and $S_{\mathsf{ii}}\eta $ by $S_{%
\mathsf{ii}}$. In the following we shall always assume that this has already
done, so we work with (\ref{eq:SLH_feedback}) with $\eta =I_{\mathsf{i}}$.
We also note that we need the condition that the inverse of $I_{\mathsf{i}%
}-S_{\mathsf{ii}}$ exists: this is the condition needed to ensure that the
network is well-posed.

We follow Smolyanov and Truman\cite{ST} in expressing the feedback reduction rule in terms of the
Belavkin-Holevo matrices. This turns out to be most convenient for our purposes.
Let $\overline{0}\cup \mathsf{e}\cup \mathsf{i}\cup 
\underline{0}$ be the usual labeling of the entries of the matrix $\mathbb{V%
}$, then the feedback reduced matrix has components 
\begin{eqnarray*}
\left[ \mathbb{V}^{\mathrm{fb}}\right] _{\alpha \beta }=\mathbb{V}_{\alpha
\beta }+\mathbb{V}_{\alpha \mathsf{i}}\left( I_{\mathsf{i}}-\mathbb{V}_{%
\mathsf{ii}}\right) ^{-1}\mathbb{V}_{\mathsf{i}\beta }
\end{eqnarray*}
where $\alpha ,\beta $ now belong to $\overline{0}\cup \mathsf{e}\cup 
\underline{0}$. That is, the matrix $\mathbb{V}^{\mathrm{fb}}$ is a M\"{o}bius
transformation of the original $\mathbb{V}$. In fact, the feedback reduced $%
\mathbb{V}^{\mathrm{fb}}$ inherits the property of $\star $-unitarity from $%
\mathbb{V}$.

It is useful to reformulate this in terms of the associated $\mathbb{G}$
matrix. Indeed we have the following result 
\begin{eqnarray}
\mathbb{G}^{\mathrm{fb}}=\underset{\mathsf{i}}{\mathrm{Schur}}\mathrm{\thinspace }%
\mathbb{G.}  \label{eq:Schur_G}
\end{eqnarray}
Therefore, the feedback reduced $\mathbb{G}^{\mathrm{fb}}$ is just the
original $\mathbb{G}$ shortened by the internal inputs and outputs.

\section{The Stratonovich Version of Quantum Feedback Networks}

We now reformulate network theory in terms of the Stratonovich generating
matrices.

The first network rule is elementary. If we have $m$ components each with
input/output labels sets $\mathsf{k}\left( i\right) $ and Stratonovich
generating matrices $\mathbf{E}^{\left( i\right) }=\left[ 
\begin{array}{cc}
E_{00}^{\left( i\right) } & E_{0\mathsf{k}\left( i\right) }^{\left( i\right)
} \\ 
E_{\mathsf{k}\left( i\right) 0}^{\left( i\right) } & E_{\mathsf{k}\left(
i\right) \mathsf{k}\left( i\right) }^{\left( i\right) }
\end{array}
\right] $ for $i=1,\cdots ,m$, then the overall Stratonovich model for the
components in parallel is 
\begin{eqnarray*}
\boxplus _{i=1}^{m}\mathbf{E}^{\left( i\right) }=\mathbf{E}^{\mathrm{parallel}}
\end{eqnarray*}
which of course has the set on input labels $\mathsf{k}=\cup _{i=1}^{m}%
\mathsf{k}\left( i\right) $ and $\mathbf{E}^{\mathrm{parallel}}$ is given by (%
\ref{eq:E_parallel}).

The second rule involves splitting the labels up as external and internal: $%
\mathsf{k}=\mathsf{e}\cup \mathsf{i}$.

\begin{proposition}
Let $\mathbf{E}$ be the Stratonovich generator matrix labeled by $0\cup 
\mathsf{e}\cup \mathsf{i}$. The quantum feedback network obtained by feeding
the internal outputs back as internal inputs is well-posed if and only if the operator
\begin{eqnarray}
\mathscr{E}_{\mathsf{ii}} \triangleq
E_{\mathsf{ii}} -\frac{i}{2} E_{\mathsf{ie}} \left( I_{\mathsf{e}}+\frac{i}{2}
 E_{\mathsf{ee}} \right)^{-1} E_{\mathsf{ei}}
\label{eq:invertor}
\end{eqnarray}
 is strictly invertible.
\end{proposition}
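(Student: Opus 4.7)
The plan is to translate the well-posedness condition into SLH language, where it becomes the invertibility of $I_{\mathsf{i}}-S_{\mathsf{ii}}$, and then compute this block explicitly from the formula $S_{\mathsf{kk}}=(I_{\mathsf{k}}-\tfrac{i}{2}E_{\mathsf{kk}})(I_{\mathsf{k}}+\tfrac{i}{2}E_{\mathsf{kk}})^{-1}$ given in the excerpt. The whole argument comes down to one block matrix manipulation together with the elementary identity that $A-B=iE_{\mathsf{kk}}$ when one writes $A=I_{\mathsf{k}}+\tfrac{i}{2}E_{\mathsf{kk}}$ and $B=I_{\mathsf{k}}-\tfrac{i}{2}E_{\mathsf{kk}}$.

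First I would partition $A$ and $B$ along the decomposition $\mathsf{k}=\mathsf{e}\cup \mathsf{i}$, noting that $A_{\mathsf{ee}}=I_{\mathsf{e}}+\tfrac{i}{2}E_{\mathsf{ee}}$, $A_{\mathsf{ei}}=\tfrac{i}{2}E_{\mathsf{ei}}$, $A_{\mathsf{ie}}=\tfrac{i}{2}E_{\mathsf{ie}}$, $A_{\mathsf{ii}}=I_{\mathsf{i}}+\tfrac{i}{2}E_{\mathsf{ii}}$, and similarly for $B$ with sign flipped. The invertibility of $A_{\mathsf{ee}}$ is needed merely to write down $\mathscr{E}_{\mathsf{ii}}$. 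Applying the standard block inversion formula to $A^{-1}$, with $C\triangleq A_{\mathsf{ii}}-A_{\mathsf{ie}}A_{\mathsf{ee}}^{-1}A_{\mathsf{ei}}$ the Schur complement of $A$ relative to $\mathsf{e}$, I get $[A^{-1}]_{\mathsf{ii}}=C^{-1}$ and $[A^{-1}]_{\mathsf{ei}}=-A_{\mathsf{ee}}^{-1}A_{\mathsf{ei}}C^{-1}$. Then the $\mathsf{ii}$ block of $S_{\mathsf{kk}}=BA^{-1}$ is
\begin{eqnarray*}
S_{\mathsf{ii}}=B_{\mathsf{ie}}[A^{-1}]_{\mathsf{ei}}+B_{\mathsf{ii}}[A^{-1}]_{\mathsf{ii}}=\big(B_{\mathsf{ii}}-B_{\mathsf{ie}}A_{\mathsf{ee}}^{-1}A_{\mathsf{ei}}\big)C^{-1}.
\end{eqnarray*}

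Second, I compute $I_{\mathsf{i}}-S_{\mathsf{ii}}=\big[C-B_{\mathsf{ii}}+B_{\mathsf{ie}}A_{\mathsf{ee}}^{-1}A_{\mathsf{ei}}\big]C^{-1}$ and expand $C$. The cross terms reorganize using $A_{\mathsf{ii}}-B_{\mathsf{ii}}=iE_{\mathsf{ii}}$ and $A_{\mathsf{ie}}-B_{\mathsf{ie}}=iE_{\mathsf{ie}}$, and after substituting $A_{\mathsf{ei}}=\tfrac{i}{2}E_{\mathsf{ei}}$ the bracket collapses exactly to $i\mathscr{E}_{\mathsf{ii}}$. This yields the key identity
\begin{eqnarray*}
I_{\mathsf{i}}-S_{\mathsf{ii}} \; = \; i\,\mathscr{E}_{\mathsf{ii}}\,C^{-1}.
\end{eqnarray*}

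Finally, since we are given a bona fide Stratonovich representation of an SLH model, $A=I_{\mathsf{k}}+\tfrac{i}{2}E_{\mathsf{kk}}$ is invertible, and together with the assumed invertibility of $A_{\mathsf{ee}}$ this forces $C$ to be invertible. The iff in the proposition is then immediate from the displayed identity: $I_{\mathsf{i}}-S_{\mathsf{ii}}$ is invertible (network well-posed) if and only if $\mathscr{E}_{\mathsf{ii}}$ is invertible. The only nontrivial step is the second one — ensuring that the expansion of $C-B_{\mathsf{ii}}+B_{\mathsf{ie}}A_{\mathsf{ee}}^{-1}A_{\mathsf{ei}}$ really simplifies to $i\mathscr{E}_{\mathsf{ii}}$ on the nose — but once one writes $A-B=iE$ at the block level the simplification is mechanical.
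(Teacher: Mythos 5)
Your proposal is correct and follows essentially the same route as the paper: both reduce well-posedness to invertibility of $I_{\mathsf{i}}-S_{\mathsf{ii}}$, apply block inversion to $I_{\mathsf{k}}+\tfrac{i}{2}E_{\mathsf{kk}}$, and find that $I_{\mathsf{i}}-S_{\mathsf{ii}}=i\,\mathscr{E}_{\mathsf{ii}}\left(I_{\mathsf{i}}+\tfrac{i}{2}\mathscr{E}_{\mathsf{ii}}\right)^{-1}$ (your $C$ is exactly the paper's $\Delta_{\mathsf{ii}}^{-1}=I_{\mathsf{i}}+\tfrac{i}{2}\mathscr{E}_{\mathsf{ii}}$, whose invertibility you justify correctly from that of $A$ and $A_{\mathsf{ee}}$). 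Your version is marginally leaner in that you compute only the needed blocks of $A^{-1}$, and you use the correct signs on the off-diagonal blocks of $I_{\mathsf{k}}-\tfrac{i}{2}E_{\mathsf{kk}}$, which the paper's displayed factor appears to misprint.
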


\begin{proof}
We have that $\mathbb{V}_{\mathsf{ii}}=S_{\mathsf{ii}}$, and $S=\left[ 
\begin{array}{cc}
S_{\mathsf{ee}} & S_{\mathsf{ei}} \\ 
S_{\mathsf{ie}} & S_{\mathsf{ii}}
\end{array}
\right] =\left[ 
\begin{array}{cc}
I_{\mathsf{e}}+\frac{i}{2}E_{\mathsf{ee}} & \frac{i}{2}E_{\mathsf{ei}} \\ 
\frac{i}{2}E_{\mathsf{ie}} & I_{\mathsf{i}}+\frac{i}{2}E_{\mathsf{ii}}
\end{array}
\right] ^{-1}\left[ 
\begin{array}{cc}
I_{\mathsf{e}}-\frac{i}{2}E_{\mathsf{ee}} & \frac{i}{2}E_{\mathsf{ei}} \\ 
\frac{i}{2}E_{\mathsf{ie}} & I_{\mathsf{i}}-\frac{i}{2}E_{\mathsf{ii}}
\end{array}
\right] $ however a standard result in inverting block matrices shows that
\begin{eqnarray*}
\left[ 
\begin{array}{cc}
I_{\mathsf{e}}+\frac{i}{2}E_{\mathsf{ee}} & \frac{i}{2}E_{\mathsf{ei}} \\ 
\frac{i}{2}E_{\mathsf{ie}} & I_{\mathsf{i}}+\frac{i}{2}E_{\mathsf{ii}}
\end{array}
\right] ^{-1}\equiv \left[ 
\begin{array}{cc}
\left( I_{\mathsf{e}}+\frac{i}{2}E_{\mathsf{ee}}\right) ^{-1}-\frac{1}{4}%
\left( I_{\mathsf{e}}+\frac{i}{2}E_{\mathsf{ee}}\right) ^{-1}E_{\mathsf{ei}%
}\Delta _{\mathsf{ii}}E_{\mathsf{ie}} & \quad -\frac{i}{2}\left( I_{\mathsf{e}}+%
\frac{i}{2}E_{\mathsf{ee}}\right) ^{-1}E_{\mathsf{ei}}\Delta _{\mathsf{ii}}
\\ 
-\frac{i}{2}\Delta _{\mathsf{ii}}E_{\mathsf{ie}}\left( I_{\mathsf{e}}+\frac{i%
}{2}E_{\mathsf{ee}}\right) ^{-1} & \Delta _{\mathsf{ii}}
\end{array}
\right] ,
\end{eqnarray*}
where $\Delta _{\mathsf{ii}}=\left( 
I_{\mathsf{i}}+\frac{i}{2}E_{\mathsf{ii}}+\frac{1}{4}E_{\mathsf{ie}}
\left( I_{\mathsf{e}}+\frac{i}{2}E_{\mathsf{ee}}\right) ^{-1}
E_{\mathsf{ei}} \right) ^{-1} 
\equiv \left( I_{\mathsf{ i}}+\frac{i}{2}\mathscr{E}_{\mathsf{ii}} \right) ^{-1}$. Substituting in yields the
explicit form
\begin{equation*}
S_{\mathsf{ii}} \equiv \left( I_{\mathsf{i}}-\frac{i}{2}\mathscr{E}_{\mathsf{ii}} \right)
\left( I_{\mathsf{i}}+\frac{i}{2} \mathscr{E}_{\mathsf{ii}}\right) ^{-1}.
\end{equation*}

The well-posed property is that $I_{\mathsf{i}}-\mathbb{V}_{\mathsf{ii}}$ is
invertible, that is $I_{\mathsf{i}}-S_{\mathsf{ii}}$ is invertible. We see
that this is equivalent to the requirement that $\mathscr{E}_{\mathsf{ii}}$ be
invertible.
\end{proof}

\bigskip

We have seen that the condition for an SLH model to have a Stratonovich representation is that $I_{\mathsf{k}} + S_{\mathsf{kk}}$
is invertible. We additionally have the well-posed property that $I_{\mathsf{i}} - S_{\mathsf{ii}}$ 
is invertible

Before stating our main theorem, we have the following lemma which will be used in the proof.
\begin{lemma}
Let $\mathbb{G}= \mathscr{H} \left( \mathbf{G}\right) $ be the Belavkin-Holevo matrix
associated with a given It\={o} generating matrix $\mathbf{G}$. We double up
the usual labels $\overline{0}\cup \mathsf{k}\cup \underline{0}$ to get $%
\left\{ \overline{0}\cup \mathsf{k}\cup \underline{0}\right\} \cup \left\{ 
\overline{0}^{\prime }\cup \mathsf{k}^{\prime }\cup \underline{0}^{\prime
}\right\} $ where the $\mathsf{k}^{\prime }$ is a copy of $\mathsf{k}$. Then 
\begin{eqnarray*}
\mathbb{G}=-\underset{\overline{0}^{\prime }\cup \mathsf{k}^{\prime }\cup 
\underline{0}^{\prime }}{\mathrm{Schur}}\mathrm{\thinspace }\left[ 
\begin{array}{cc}
2\mathbb{I} & \sqrt{2}\mathbb{I} \\ 
\sqrt{2}\mathbb{I} & \mathbb{I}+\frac{i}{2}\mathbb{E}
\end{array}
\right] .
\end{eqnarray*}
\end{lemma}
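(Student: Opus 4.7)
The plan is to compute the indicated Schur complement directly and then identify the result with $\mathbb{G}$ via the Cayley-type relation linking $\mathbb{V}$ to $\mathbb{E}$ that was already established in the Stratonovich subsection.

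First I would read off the block structure. Because the combined index set is ordered as $\{\overline{0}\cup \mathsf{k}\cup \underline{0}\}\cup\{\overline{0}'\cup \mathsf{k}'\cup \underline{0}'\}$, the block to be shorted (the primed indices) occupies the bottom-right corner of the given $2\times 2$ block matrix, namely $\mathbb{I}+\tfrac{i}{2}\mathbb{E}$, while the off-diagonal blocks are both $\sqrt{2}\,\mathbb{I}$ and the retained top-left block is $2\mathbb{I}$. Invertibility of $\mathbb{I}+\tfrac{i}{2}\mathbb{E}$ is exactly the condition, noted just before the lemma, for the Stratonovich representation to exist (equivalently, for $I_{\mathsf{k}}+S_{\mathsf{kk}}$ to be invertible), so the Schur complement is well-defined under the standing assumption.

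Second, I would carry out the one-line computation
\begin{eqnarray*}
\underset{\overline{0}'\cup \mathsf{k}'\cup \underline{0}'}{\mathrm{Schur}}
\left[\begin{array}{cc} 2\mathbb{I} & \sqrt{2}\,\mathbb{I} \\ \sqrt{2}\,\mathbb{I} & \mathbb{I}+\tfrac{i}{2}\mathbb{E}\end{array}\right]
= 2\mathbb{I} - 2\bigl(\mathbb{I}+\tfrac{i}{2}\mathbb{E}\bigr)^{-1}.
\end{eqnarray*}
Third, I would invoke the identity $\mathbb{V}=(\mathbb{I}-\tfrac{i}{2}\mathbb{E})(\mathbb{I}+\tfrac{i}{2}\mathbb{E})^{-1}$ from the Stratonovich subsection and rewrite the numerator as $2\mathbb{I}-(\mathbb{I}+\tfrac{i}{2}\mathbb{E})$ to obtain $\mathbb{V}=2(\mathbb{I}+\tfrac{i}{2}\mathbb{E})^{-1}-\mathbb{I}$. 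Together with $\mathbb{G}=\mathbb{V}-\mathbb{I}$ this gives
\begin{eqnarray*}
\mathbb{G}=2\bigl(\mathbb{I}+\tfrac{i}{2}\mathbb{E}\bigr)^{-1}-2\mathbb{I},
\end{eqnarray*}
which is the negative of the Schur complement just computed, completing the proof.

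The only real obstacle is bookkeeping: making sure the primed/unprimed ordering is matched to the block labels $\mathsf{a}/\mathsf{b}$ of the Schur complement definition so that the correct block is inverted, and keeping track of the overall sign. Once the block structure is pinned down the manipulation is purely algebraic and requires no noise-field content beyond the Cayley-type formula for $\mathbb{V}$ in terms of $\mathbb{E}$.
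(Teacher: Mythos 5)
Your proposal is correct and follows essentially the same route as the paper: compute the Schur complement as $2\mathbb{I}-2\left(\mathbb{I}+\tfrac{i}{2}\mathbb{E}\right)^{-1}$ and match it, via the Cayley relation $\mathbb{V}=\frac{\mathbb{I}-\frac{i}{2}\mathbb{E}}{\mathbb{I}+\frac{i}{2}\mathbb{E}}$ and $\mathbb{G}=\mathbb{V}-\mathbb{I}$, against $-\mathbb{G}$. The only (cosmetic) difference is that you rearrange $\mathbb{V}$ as $2\left(\mathbb{I}+\tfrac{i}{2}\mathbb{E}\right)^{-1}-\mathbb{I}$ whereas the paper simplifies both sides to $\pm i\mathbb{E}\left(\mathbb{I}+\tfrac{i}{2}\mathbb{E}\right)^{-1}$, and you additionally make explicit the invertibility hypothesis that the paper leaves implicit.
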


\begin{proof}
This just says that we get $\mathbb{G}$ from shortening out the duplicate
set of labels. Indeed we have 
\begin{eqnarray*}
\underset{\overline{0}^{\prime }\cup \mathsf{k}^{\prime }\cup \underline{0}%
^{\prime }}{\mathrm{Schur}}\mathrm{\thinspace }\left[ 
\begin{array}{cc}
2\mathbb{I} & \sqrt{2}\mathbb{I} \\ 
\sqrt{2}\mathbb{I} & \mathbb{I}+\frac{i}{2}\mathbb{E}
\end{array}
\right] =2\mathbb{I}-2\left( \mathbb{I}+\frac{i}{2}\mathbb{E}\right) ^{-1}=%
\frac{i\mathbb{E}}{\mathbb{I}+\frac{i}{2}\mathbb{E}}
\end{eqnarray*}
which agrees with $\mathbb{G}=\frac{\mathbb{I}-\frac{i}{2}\mathbb{E}}{%
\mathbb{I}+\frac{i}{2}\mathbb{E}}\mathbb{-I}$ up to the sign.
\end{proof}

We are now able to state our main result which is the feedback reduction
rule in terms of the Stratonovich generator matrices.

\begin{theorem}
Let $\mathbf{E}$ be the Stratonovich generator matrix labeled by $0\cup 
\mathsf{e}\cup \mathsf{i}$ with $E_{\mathsf{ii}}$
invertible. The feedback
reduced Stratonovich generator (yielding (\ref{eq:SLH_feedback}) with $\eta=I_{\mathsf{i}}$) is 
\begin{eqnarray}
\mathbf{E}^{\mathrm{fb}}=\underset{\mathsf{i}}{\mathrm{Schur}}\mathrm{\thinspace }%
\mathbf{E}.
\label{eq:E_Schur}
\end{eqnarray}
\end{theorem}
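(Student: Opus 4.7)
My plan is to pull the claim back to the It\={o} side using the preceding lemma, and then invoke the commutativity of iterated Schur complements. Write $J = \overline{0}\cup \mathsf{k}\cup \underline{0}$ and
\begin{equation*}
M(\mathbb{E}) \;:=\; \begin{bmatrix} 2\mathbb{I} & \sqrt{2}\mathbb{I} \\ \sqrt{2}\mathbb{I} & \mathbb{I}+\tfrac{i}{2}\mathbb{E}\end{bmatrix}
\end{equation*}
for the doubled matrix on label set $J\cup J'$ appearing in the lemma, so that $\mathbb{G} = -\underset{J'}{\mathrm{Schur}}\,M(\mathbb{E})$. Combining this with the It\={o} feedback rule $\mathbb{G}^{\mathrm{fb}} = \underset{\mathsf{i}}{\mathrm{Schur}}\,\mathbb{G}$ of (\ref{eq:Schur_G}), together with the associativity of successive shortenings, gives
\begin{equation*}
\mathbb{G}^{\mathrm{fb}} \;=\; -\underset{\mathsf{i}\cup J'}{\mathrm{Schur}}\,M(\mathbb{E}).
\end{equation*}

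The crux is to re-group this as $\underset{\hat{K}'}{\mathrm{Schur}}\circ \underset{\mathsf{i}\cup\mathsf{i}'}{\mathrm{Schur}}$, where $\hat{K}' = \overline{0}'\cup \mathsf{e}'\cup \underline{0}'$ and $\mathsf{i}'$ is the primed copy of $\mathsf{i}$, and to identify the intermediate matrix. Because $2\mathbb{I}$ is diagonal and $\sqrt{2}\mathbb{I}$ couples only matched labels, the rows and columns of $M(\mathbb{E})$ indexed by $\mathsf{i}\cup \mathsf{i}'$ have support almost entirely inside the inner block
\begin{equation*}
A \;=\; \begin{bmatrix} 2I_{\mathsf{i}} & \sqrt{2}I_{\mathsf{i}} \\ \sqrt{2}I_{\mathsf{i}} & I_{\mathsf{i}} + \tfrac{i}{2}E_{\mathsf{ii}}\end{bmatrix},
\end{equation*}
with the only nonzero cross-couplings to the kept labels $\hat{K}\cup \hat{K}'$ being $\tfrac{i}{2}E_{0\mathsf{i}}, \tfrac{i}{2}E_{\mathsf{ei}}$ (in the $\hat{K}'$ column above the $\mathsf{i}'$ diagonal) and $\tfrac{i}{2}E_{\mathsf{ie}}, \tfrac{i}{2}E_{\mathsf{i}0}$ (in the $\hat{K}'$ row to the right of it). The standard block-inverse formula gives $(A^{-1})_{\mathsf{i}'\mathsf{i}'} = (\tfrac{i}{2}E_{\mathsf{ii}})^{-1}$, legitimate since $E_{\mathsf{ii}}$ is assumed invertible. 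A short bookkeeping calculation then shows that the Schur correction lives entirely in the $\hat{K}'\times \hat{K}'$ block, where it equals $\tfrac{i}{2}\,\mathscr{H}\bigl(\bigl[\begin{smallmatrix}E_{0\mathsf{i}}\\ E_{\mathsf{ei}}\end{smallmatrix}\bigr] E_{\mathsf{ii}}^{-1}\bigl[\begin{smallmatrix}E_{\mathsf{i}0} & E_{\mathsf{ie}}\end{smallmatrix}\bigr]\bigr)$. Recognising this as the Belavkin-Holevo image of the Schur correction in $\mathbf{E}$ yields the clean identity
\begin{equation*}
\underset{\mathsf{i}\cup\mathsf{i}'}{\mathrm{Schur}}\,M(\mathbb{E}) \;=\; M\bigl(\mathscr{H}(\underset{\mathsf{i}}{\mathrm{Schur}}\,\mathbf{E})\bigr),
\end{equation*}
now regarded as a matrix on the reduced label set $\hat{K}\cup \hat{K}'$.

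To finish, set $\mathbf{E}^{\mathrm{fb}} := \underset{\mathsf{i}}{\mathrm{Schur}}\,\mathbf{E}$ and apply the lemma a second time, to the reduced system, to get $\mathbb{G}(\mathbf{E}^{\mathrm{fb}}) = -\underset{\hat{K}'}{\mathrm{Schur}}\,M(\mathscr{H}(\mathbf{E}^{\mathrm{fb}}))$. Chaining,
\begin{equation*}
\mathbb{G}^{\mathrm{fb}} \;=\; -\underset{\hat{K}'}{\mathrm{Schur}}\,\underset{\mathsf{i}\cup\mathsf{i}'}{\mathrm{Schur}}\,M(\mathbb{E}) \;=\; -\underset{\hat{K}'}{\mathrm{Schur}}\,M(\mathscr{H}(\mathbf{E}^{\mathrm{fb}})) \;=\; \mathbb{G}(\mathbf{E}^{\mathrm{fb}}).
\end{equation*}
Since the map $\mathbf{E}\mapsto \mathbb{G}$ is invertible in the representability regime (via the inversion formulas around (\ref{eq:E_from_S})), this identifies the Stratonovich generator of the feedback network as $\mathbf{E}^{\mathrm{fb}} = \underset{\mathsf{i}}{\mathrm{Schur}}\,\mathbf{E}$, proving (\ref{eq:E_Schur}). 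The main obstacle is the displayed identity in the second paragraph, essentially a $2\times 2$ block-inverse computation combined with careful tracking of where the entries of $\mathbb{E}$ land inside $M$; once that is in hand, the rest is pure commutativity of Schur complements and a second invocation of the lemma.
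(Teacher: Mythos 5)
Your proposal is correct and follows essentially the same route as the paper's own proof: both combine the preceding lemma with the It\={o}-side rule (\ref{eq:Schur_G}), use commutativity of successive shortenings to eliminate $\mathsf{i}\cup\mathsf{i}^{\prime}$ first, verify that $\underset{\mathsf{i}\cup\mathsf{i}^{\prime}}{\mathrm{Schur}}$ of the doubled matrix reproduces the same structure with $\mathbb{E}$ replaced by $\mathscr{H}\left(\underset{\mathsf{i}}{\mathrm{Schur}}\,\mathbf{E}\right)$, and then shorten the remaining primed labels to identify $\mathbf{E}^{\mathrm{fb}}$. The only difference is presentational: the paper writes out the full $8\times 8$ block array explicitly, whereas you carry out the identical computation via the block inverse of the $2\times 2$ corner.
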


That is, 
\begin{eqnarray*}
\left[ 
\begin{array}{cc}
E_{00}^{\mathrm{fb}} & E_{0\mathsf{e}}^{\mathrm{fb}} \\ 
E_{\mathsf{e}0}^{\mathrm{fb}} & E_{\mathsf{ee}}^{\mathrm{fb}}
\end{array}
\right] =\left[ 
\begin{array}{cc}
E_{00} & E_{0\mathsf{e}} \\ 
E_{\mathsf{e}0} & E_{\mathsf{ee}}
\end{array}
\right] -\left[ 
\begin{array}{c}
E_{0\mathsf{i}} \\ 
E_{\mathsf{ei}}
\end{array}
\right] E_{\mathsf{ii}}^{-1}\left[ 
\begin{array}{cc}
E_{\mathsf{i}0} & E_{\mathsf{ie}}
\end{array}
\right] .
\end{eqnarray*}

\begin{proof}  
Combining (\ref{eq:Schur_G}) with the lemma, we see that $\mathbb{G}^{\mathrm{%
fb}}$ can be written as successive Schur complements: 
\begin{eqnarray*}
\mathbb{G}^{\mathrm{fb}}=-\underset{\mathsf{i}}{\mathrm{Schur}}\mathrm{\thinspace }%
\underset{\overline{0}^{\prime }\cup \mathsf{k}^{\prime }\cup \underline{0}%
^{\prime }}{\mathrm{Schur}}\mathrm{\thinspace }\left[ 
\begin{array}{cc}
2\mathbb{I} & \sqrt{2}\mathbb{I} \\ 
\sqrt{2}\mathbb{I} & \mathbb{I}+\frac{i}{2}\mathbb{E}
\end{array}
\right] .
\end{eqnarray*}
We start with the doubled up set of labels $\left\{ \overline{0}\cup \mathsf{%
e}\cup \mathsf{i}\cup \underline{0}\right\} \cup \left\{ \overline{0}%
^{\prime }\cup \mathsf{e}^{\prime }\cup \mathsf{i}^{\prime }\cup \underline{0%
}^{\prime }\right\} $ and shortened by the duplicate labels $\left\{ 
\overline{0}^{\prime }\cup \mathsf{e}^{\prime }\cup \mathsf{i}^{\prime }\cup 
\underline{0}^{\prime }\right\} $ and then by the labels $\mathsf{i}$.
However, we could alternatively shortened first by $\mathsf{i}\cup \mathsf{i}%
^{\prime }$ and then by $\overline{0}^{\prime }\cup \mathsf{e}^{\prime }\cup 
\underline{0}^{\prime }$ to get the same result. Now 
\begin{eqnarray*}
&&\underset{\mathsf{i}\cup \mathsf{i}^{\prime }}{\mathrm{Schur}}\mathrm{%
\thinspace }\left[ 
\begin{array}{cc}
2\mathbb{I} & \sqrt{2}\mathbb{I} \\ 
\sqrt{2}\mathbb{I} & \mathbb{I}+\frac{i}{2}\mathbb{E}
\end{array}
\right] = \underset{\mathsf{i}\cup \mathsf{i}^{\prime }}{\mathrm{Schur}}\mathrm{%
\thinspace }\left[ 
\begin{array}{cccccccc}
2 & 0 & 0 & 0 & \sqrt{2} & 0 & 0 & 0 \\ 
0 & 2I_{\mathsf{e}} & 0 & 0 & 0 & \sqrt{2}I_{\mathsf{e}} & 0 & 0 \\ 
0 & 0 & 2I_{\mathsf{i}} & 0 & 0 & 0 & \sqrt{2}I_{\mathsf{i}} & 0 \\ 
0 & 0 & 0 & 2 & 0 & 0 & 0 & \sqrt{2} \\ 
\sqrt{2} & 0 & 0 & 0 & 1 & \frac{i}{2}\mathbb{E}_{0\mathsf{e}} & \frac{i}{2}%
\mathbb{E}_{0\mathsf{i}} & \frac{i}{2}\mathbb{E}_{00} \\ 
0 & \sqrt{2}I_{\mathsf{e}} & 0 & 0 & 0 & I_{\mathsf{e}}+\frac{i}{2}\mathbb{E}%
_{\mathsf{ee}} & \frac{i}{2}\mathbb{E}_{\mathsf{ei}} & \frac{i}{2}\mathbb{E}%
_{\mathsf{e}0} \\ 
0 & 0 & \sqrt{2}I_{\mathsf{i}} & 0 & 0 & \frac{i}{2}\mathbb{E}_{\mathsf{ie}}
& I_{\mathsf{i}}+\frac{i}{2}\mathbb{E}_{\mathsf{ii}} & \frac{i}{2}\mathbb{E}%
_{\mathsf{i}0} \\ 
0 & 0 & 0 & \sqrt{2} & 0 & 0 & 0 & 1
\end{array}
\right]
\end{eqnarray*}
\begin{eqnarray*}
=
\left[ 
\begin{array}{cccccc}
2 & 0 & 0 & \sqrt{2} & 0 & 0 \\ 
0 & 2I_{\mathsf{e}} & 0 & 0 & \sqrt{2}I_{\mathsf{e}} & 0 \\ 
0 & 0 & 2 & 0 & 0 & \sqrt{2} \\ 
\sqrt{2} & 0 & 0 & 1 & \frac{i}{2}\mathbb{E}_{0\mathsf{e}} & \frac{i}{2}%
\mathbb{E}_{00} \\ 
0 & \sqrt{2}I_{\mathsf{e}} & 0 & 0 & I_{\mathsf{e}}+\frac{i}{2}\mathbb{E}_{%
\mathsf{ee}} & \frac{i}{2}\mathbb{E}_{\mathsf{e}0} \\ 
0 & 0 & \sqrt{2} & 0 & 0 & 1
\end{array}
\right] 
-\left[ 
\begin{array}{cc}
0 & 0 \\ 
0 & 0 \\ 
0 & 0 \\ 
0 & \frac{i}{2}\mathbb{E}_{0\mathsf{i}} \\ 
0 & \frac{i}{2}\mathbb{E}_{\mathsf{ei}} \\ 
0 & 0
\end{array}
\right] \left[ 
\begin{array}{cc}
2I_{\mathsf{i}} & \sqrt{2}I_{\mathsf{i}} \\ 
\sqrt{2}I_{\mathsf{i}} & I_{\mathsf{i}}+\frac{i}{2}\mathbb{E}_{\mathsf{ii}}
\end{array}
\right] ^{-1}\left[ 
\begin{array}{cccccc}
0 & 0 & 0 & 0 & 0 & 0 \\ 
0 & 0 & 0 & 0 & \frac{i}{2}\mathbb{E}_{\mathsf{ie}} & \frac{i}{2}\mathbb{E}_{%
\mathsf{i}0}
\end{array}
\right]
\end{eqnarray*}
but 
\begin{eqnarray*}
\left[ 
\begin{array}{cc}
2I_{\mathsf{i}} & \sqrt{2}I_{\mathsf{i}} \\ 
\sqrt{2}I_{\mathsf{i}} & I_{\mathsf{i}}+\frac{i}{2}\mathbb{E}_{\mathsf{ii}}
\end{array}
\right] ^{-1}=\frac{1}{i\mathbb{E}_{\mathsf{ii}}}\left[ 
\begin{array}{cc}
I_{\mathsf{i}}+\frac{i}{2}\mathbb{E}_{\mathsf{ii}} & -\sqrt{2}I_{\mathsf{i}}
\\ 
-\sqrt{2}I_{\mathsf{i}} & 2I_{\mathsf{i}}
\end{array}
\right] 
\end{eqnarray*}
so that 
\begin{eqnarray*}
\underset{\mathsf{i}\cup \mathsf{i}^{\prime }}{\mathrm{Schur}}\mathrm{\thinspace 
}\left[ 
\begin{array}{cc}
2\mathbb{I} & \sqrt{2}\mathbb{I} \\ 
\sqrt{2}\mathbb{I} & \mathbb{I}+\frac{i}{2}\mathbb{E}
\end{array}
\right] =\left[ 
\begin{array}{cccccc}
2 & 0 & 0 & \sqrt{2} & 0 & 0 \\ 
0 & 2I_{\mathsf{e}} & 0 & 0 & \sqrt{2}I_{\mathsf{e}} & 0 \\ 
0 & 0 & 2 & 0 & 0 & \sqrt{2} \\ 
\sqrt{2} & 0 & 0 & 1 & \frac{i}{2}\mathbb{E}_{0\mathsf{e}}^{\mathrm{fb}} & 
\frac{i}{2}\mathbb{E}_{00}^{\mathrm{fb}} \\ 
0 & \sqrt{2}I_{\mathsf{e}} & 0 & 0 & I_{\mathsf{e}}+\frac{i}{2}\mathbb{E}_{%
\mathsf{ee}}^{\mathrm{fb}} & \frac{i}{2}\mathbb{E}_{\mathsf{e}0}^{\mathrm{fb}}
\\ 
0 & 0 & \sqrt{2} & 0 & 0 & 1
\end{array}
\right]
\end{eqnarray*}
where 
\begin{eqnarray*}
\mathbb{E}^{\mathrm{fb}}=\underset{\mathsf{i}}{\mathrm{Schur}}\mathrm{\thinspace }%
\mathbb{E}=\mathscr{H}\left( \underset{\mathsf{i}}{\mathrm{Schur}}\mathrm{%
\thinspace }\mathbf{E}\right) =\mathscr{H}\left( \mathbf{E}^{\mathrm{fb}%
}\right) .
\end{eqnarray*}
Shortening over the remaining duplicate labels $\left\{ \overline{0}^{\prime
}\cup \mathsf{e}^{\prime }\cup \underline{0}^{\prime }\right\} $ and
multiplying by minus, we recover $\mathbb{G}^{\mathrm{fb}}$. From the
expression obtained in the Lemma, we see that $\mathbf{E}^{\mathrm{fb}}=%
\underset{\mathsf{i}}{\mathrm{Schur}}$\thinspace $\mathbf{E}$ must be the
Stratonovich generating matrix.
\end{proof}

\bigskip

It is clear from the statement of the Theorem that there are situations where the 
feedback network is well-posed, i.e. $\mathscr{E}_{\mathsf{ii}}$ is invertible, but
the Schur complement is not invertible, i.e. $E_{\mathsf{ii}}$ is not invertible.
The following proposition gives a simple test of when the Schur complement exists.

\bigskip

\begin{proposition}
Let $S_{\mathsf{kk}}=\left[ 
\begin{array}{cc}
S_{\mathsf{ee}} & S_{\mathsf{ei}} \\ 
S_{\mathsf{ie}} & S_{\mathsf{ii}}
\end{array}
\right] $ be a scattering matrix decomposed with respect to a specification
of internal and external input/outputs $\mathsf{k}=\mathsf{e}\cup \mathsf{i}$%
, with $I_{\mathsf{k}}+S_{\mathsf{kk}}$ invertible. Let us set 
\begin{equation*}
\mathscr{S}_{\mathsf{ii}}\triangleq S_{\mathsf{ii}}-S_{\mathsf{ie}}\left( I_{\mathsf{i}%
}+S_{\mathsf{ii}}\right) ^{-1}S_{\mathsf{ie}}.
\end{equation*}
Then the Stratonovich matrix block $E_{\mathsf{ii}}$ is invertible if and
only if $I_{\mathsf{i}}-\mathscr{S}_{\mathsf{ii}}$ is invertible.
\end{proposition}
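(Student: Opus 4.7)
The plan is straightforward: express the internal block $E_{\mathsf{ii}}$ as an explicit Cayley-type function of $\mathscr{S}_{\mathsf{ii}}$, from which the equivalence of invertibilities falls out immediately. The starting point is the inversion relation (\ref{eq:E_from_S}), $E_{\mathsf{kk}} = \frac{2}{i}(I_{\mathsf{k}} - S_{\mathsf{kk}})(I_{\mathsf{k}} + S_{\mathsf{kk}})^{-1}$, which is well-defined under the standing hypothesis that $I_{\mathsf{k}} + S_{\mathsf{kk}}$ is invertible. Using the elementary identity $(I - X)(I + X)^{-1} = 2(I + X)^{-1} - I$, I would rewrite this as
\begin{eqnarray*}
E_{\mathsf{kk}} = \frac{4}{i}(I_{\mathsf{k}} + S_{\mathsf{kk}})^{-1} - \frac{2}{i}\, I_{\mathsf{k}},
\end{eqnarray*}
so that extracting $E_{\mathsf{ii}}$ reduces to computing the $(\mathsf{ii})$ block of $(I_{\mathsf{k}} + S_{\mathsf{kk}})^{-1}$.

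The second step is standard block matrix inversion applied to
\begin{eqnarray*}
I_{\mathsf{k}} + S_{\mathsf{kk}} = \left[ \begin{array}{cc} I_{\mathsf{e}} + S_{\mathsf{ee}} & S_{\mathsf{ei}} \\ S_{\mathsf{ie}} & I_{\mathsf{i}} + S_{\mathsf{ii}} \end{array}\right].
\end{eqnarray*}
Its $(\mathsf{ii})$-block inverse is $(I_{\mathsf{i}} + \mathscr{S}_{\mathsf{ii}})^{-1}$, where $\mathscr{S}_{\mathsf{ii}}$ is the Schur complement with respect to the $\mathsf{ee}$-block, i.e.\ precisely the expression appearing in the proposition (reading the natural $S_{\mathsf{ei}}$ on the right and $(I_{\mathsf{e}} + S_{\mathsf{ee}})^{-1}$ in the middle). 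Substituting this back and applying the Cayley identity in reverse, one obtains the compact form
\begin{eqnarray*}
E_{\mathsf{ii}} = \frac{2}{i}(I_{\mathsf{i}} - \mathscr{S}_{\mathsf{ii}})(I_{\mathsf{i}} + \mathscr{S}_{\mathsf{ii}})^{-1}.
\end{eqnarray*}
Since $(I_{\mathsf{i}} + \mathscr{S}_{\mathsf{ii}})^{-1}$ is invertible by construction, this representation yields the claim at once: $E_{\mathsf{ii}}$ is invertible if and only if $I_{\mathsf{i}} - \mathscr{S}_{\mathsf{ii}}$ is invertible.

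The main obstacle is not algebraic but a matter of bookkeeping around auxiliary invertibility. The block inversion step implicitly requires $I_{\mathsf{e}} + S_{\mathsf{ee}}$ to be invertible, a condition absent from the explicit hypotheses but already forced by the very definition of $\mathscr{S}_{\mathsf{ii}}$. One should therefore remark at the outset that the well-definedness of $\mathscr{S}_{\mathsf{ii}}$ already supplies this ingredient, and that combined with invertibility of $I_{\mathsf{k}} + S_{\mathsf{kk}}$ it automatically gives invertibility of $I_{\mathsf{i}} + \mathscr{S}_{\mathsf{ii}}$ via the identity $\det(I_{\mathsf{k}} + S_{\mathsf{kk}}) = \det(I_{\mathsf{e}} + S_{\mathsf{ee}})\det(I_{\mathsf{i}} + \mathscr{S}_{\mathsf{ii}})$ in the matrix setting (or the direct operator version of the Schur factorization). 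Once this is in place, the remainder of the argument is nothing more than the same Cayley-transform manipulation that underlies the derivation of (\ref{eq:E_from_S}) itself.
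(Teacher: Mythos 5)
Your proof is correct and follows essentially the same route as the paper: both arguments reduce the claim to the identity $E_{\mathsf{ii}} = \tfrac{2}{i}\left( I_{\mathsf{i}} - \mathscr{S}_{\mathsf{ii}}\right)\left( I_{\mathsf{i}} + \mathscr{S}_{\mathsf{ii}}\right)^{-1}$ obtained by block-inverting $I_{\mathsf{k}} + S_{\mathsf{kk}}$, your identity $(I-X)(I+X)^{-1} = 2(I+X)^{-1} - I$ being merely a tidy shortcut for extracting the $(\mathsf{ii})$ block. Your explicit bookkeeping of the implicit invertibility of $I_{\mathsf{e}} + S_{\mathsf{ee}}$ and hence of $I_{\mathsf{i}} + \mathscr{S}_{\mathsf{ii}}$, and your silent correction of the paper's displayed definition of $\mathscr{S}_{\mathsf{ii}}$ (which should read $S_{\mathsf{ii}} - S_{\mathsf{ie}}\left( I_{\mathsf{e}} + S_{\mathsf{ee}}\right)^{-1} S_{\mathsf{ei}}$, as the printed version does not even compose dimensionally), are welcome tightenings of an argument the paper leaves at the level of ``the same methods as the previous proposition.''
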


\begin{proof}
Starting from the identity $E_{\mathsf{kk}}=\frac{2}{i}\left( I_{\mathsf{k}%
}+S_{\mathsf{kk}}\right) ^{-1}\left( I_{\mathsf{k}}-S_{\mathsf{kk}}\right) $
we obtain the sub-block
\begin{equation*}
E_{\mathsf{ii}}=\frac{2}{i}\left( I_{\mathsf{i}}+\mathscr{S}_{\mathsf{ii}}\right)
^{-1}\left( I_{\mathsf{i}}- \mathscr{S}_{\mathsf{ii}}\right) 
\end{equation*}
using the same methods as used in the previous proposition. This formal
expression shows precisely when the block $E_{\mathsf{ii}}$ is invertible,
and we arrive at the desired conclusion.
\end{proof}

\subsection{Beam-splitter Example}
As an example we consider the simple example of a beam-splitter with
Stratonovich matrix
\begin{equation*}
E_{\mathsf{kk}}=\left[ 
\begin{array}{cc}
E_{\mathsf{ee}} & E_{\mathsf{ei}} \\ 
E_{\mathsf{ie}} & E_{\mathsf{ii}}
\end{array}
\right] \equiv \left[ 
\begin{array}{cc}
\alpha  & \beta  \\ 
\beta ^{\ast } & \gamma 
\end{array}
\right] 
\end{equation*}
with $\alpha ,\beta $ real and $\beta $ complex. A simple algebra shows that
\begin{equation*}
S_{\mathsf{kk}}\equiv \frac{1}{1+\frac{i}{2}\left( \alpha +\gamma \right) -%
\frac{1}{4}\left( \alpha \gamma -|\beta |^{2}\right) }\left[ 
\begin{array}{cc}
1+\frac{i}{2}\left( \gamma -\alpha \right) +\frac{1}{4}\left( \alpha \gamma
-|\beta |^{2}\right)  & -i\beta  \\ 
-i\beta ^{\ast } & 1+\frac{i}{2}\left( \alpha -\gamma \right) +\frac{1}{4}%
\left( \alpha \gamma -|\beta |^{2}\right) 
\end{array}
\right] 
\end{equation*}
which gives the general form of a beam-splitter derivable from a Stratonovich
form. 

\begin{figure}[htbp]
	\centering
		\includegraphics[width=0.50\textwidth]{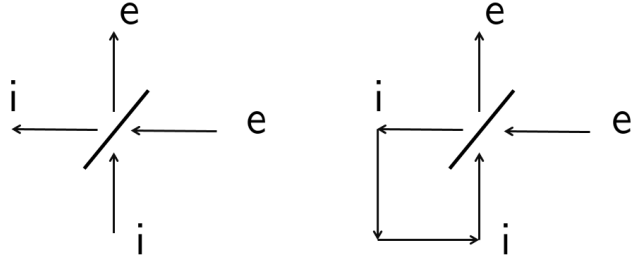}
	\caption{(Left) a beam-splitter with matrix $S_{\mathsf{kk}}$; (Right) its feedback reduction.}
	\label{fig:beamsplitter_loop_example}
\end{figure}

We see that if $\gamma =0\left( =E_{\mathsf{ii}}\right) $ then the Schur
complement is not defined so the Theorem does not apply automatically.
Nevertheless we see that $\mathscr{E}_{\mathsf{ii}}=\gamma -\frac{i}{2}\frac{%
|\beta |^{2}}{1+\frac{i}{2}\alpha }$ and so the network may be well-posed  even if 
$\gamma =0$, and it is instructive to look at this case. Here we have
\begin{equation*}
S_{\mathsf{kk}}\equiv \frac{1}{1+\frac{i}{2}\alpha +\frac{1}{4}|\beta |^{2}}%
\left[ 
\begin{array}{cc}
1-\frac{i}{2}\alpha -\frac{1}{4}|\beta |^{2} & -i\beta  \\ 
-i\beta ^{\ast } & 1+\frac{i}{2}\alpha -\frac{1}{4}|\beta |^{2}
\end{array}
\right] 
\end{equation*}
prior to feedback, and for all parametrizations we have that the feedback
reduced scattering is
\begin{equation*}
S_{\mathsf{ee}}^{\text{fb}}\equiv -1.
\end{equation*}
The answer is of course not something which has a well-defined Stratonovich
form: in fact we have only $\lim_{\varepsilon \rightarrow \infty }\frac{1-%
\frac{i}{2}\varepsilon }{1+\frac{i}{2}\varepsilon }=-1$ but that this can
not be realized for finite $\varepsilon $.

\subsection{The Series Product}

Let us recall that the series product can be written in terms of Belavkin-Holevo matrices as
$\mathbb{V}_{\mathrm{series}}=\mathbb{V}_{2}\mathbb{V}_{1}$. If both $\mathbb{V}_{2}$ 
and $\mathbb{V}_{1}$ come from Stratonovich matrices $\mathbb{E}_{2}$ and $\mathbb{E}_{1}$ respectively, then the
corresponding $\mathbb{E}_{\mathrm{series}}$ should be expressible in terms of $\mathbb{E}_{2}$ and $\mathbb{E}_{1}$.

\begin{proposition}
Suppose that both $\mathbb{V}_{2}$ and $\mathbb{V}_{1}$ come from Stratonovich matrices 
$\mathbb{E}_{2}$ and $\mathbb{E}_{1}$ respectively,  then their series product 
$\mathbb{V}_{\mathrm{series}}=\mathbb{V}_{2}\mathbb{V}_{1}$
has the Stratonovich matrix
\begin{equation*}
\mathbb{E}_{\text{series}}=\left( \mathbb{I}+\frac{i}{2}\mathbb{E}%
_{2}\right) ^{-1}\left( \mathbb{E}_{1}+\mathbb{E}_{2}\right) \left( \mathbb{I%
}-\frac{1}{4}\mathbb{E}_{2}\mathbb{E}_{1}\right) ^{-1}\left( \mathbb{I}+%
\frac{i}{2}\mathbb{E}_{2}\right) .
\end{equation*}
\bigskip 
\end{proposition}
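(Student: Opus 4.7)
The plan is to read the claim as an inversion problem for the Cayley--like transform $\mathbb{V} = (\mathbb{I} - \tfrac{i}{2}\mathbb{E})(\mathbb{I} + \tfrac{i}{2}\mathbb{E})^{-1}$, applied to the product $\mathbb{V}_{\mathrm{series}} = \mathbb{V}_2 \mathbb{V}_1$. First I introduce the shorthand $A_j \triangleq \mathbb{I} + \tfrac{i}{2}\mathbb{E}_j$ and $B_j \triangleq \mathbb{I} - \tfrac{i}{2}\mathbb{E}_j$, so that $\mathbb{V}_j = B_j A_j^{-1}$. Because $A_j$ and $B_j$ are polynomials in $\mathbb{E}_j$ they commute, and this single algebraic fact is what makes the whole calculation short.

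Next I move the inverses to the outside by multiplying $\mathbb{V}_{\mathrm{series}} = B_2 A_2^{-1} B_1 A_1^{-1}$ by $A_2$ on the left and $A_1$ on the right. Using $A_2 B_2 = B_2 A_2$ this gives
\begin{equation*}
A_2 \bigl(\mathbb{I} \pm \mathbb{V}_{\mathrm{series}}\bigr) A_1 = A_2 A_1 \pm B_2 B_1 .
\end{equation*}
A direct expansion produces the two identities
\begin{equation*}
A_2 A_1 + B_2 B_1 = 2\bigl(\mathbb{I} - \tfrac{1}{4}\mathbb{E}_2 \mathbb{E}_1\bigr), \qquad A_2 A_1 - B_2 B_1 = i(\mathbb{E}_1 + \mathbb{E}_2),
\end{equation*}
so the two factors $\mathbb{E}_1 + \mathbb{E}_2$ and $\mathbb{I} - \tfrac14 \mathbb{E}_2 \mathbb{E}_1$ appearing in the target expression emerge naturally as the ``numerator'' and ``denominator'' of the Cayley ratio.

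Finally I invert the Cayley transform in the form $\mathbb{E}_{\mathrm{series}} = \tfrac{2}{i}(\mathbb{I} - \mathbb{V}_{\mathrm{series}})(\mathbb{I} + \mathbb{V}_{\mathrm{series}})^{-1}$. Substituting the two displays above, the middle $A_1^{-1} A_1$ cancels and one is left with
\begin{equation*}
\mathbb{E}_{\mathrm{series}} = \tfrac{2}{i}\, A_2^{-1}(A_2 A_1 - B_2 B_1)(A_2 A_1 + B_2 B_1)^{-1} A_2,
\end{equation*}
which after inserting the identities collapses to $(\mathbb{I}+\tfrac{i}{2}\mathbb{E}_2)^{-1}(\mathbb{E}_1+\mathbb{E}_2)(\mathbb{I}-\tfrac{1}{4}\mathbb{E}_2 \mathbb{E}_1)^{-1}(\mathbb{I}+\tfrac{i}{2}\mathbb{E}_2)$, exactly as claimed.

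The main obstacle is simply bookkeeping: $\mathbb{E}_1$ and $\mathbb{E}_2$ need not commute, so the left and right factors must be tracked carefully, and the key trick is the two-sided multiplication by $A_2$ and $A_1$ that reduces everything to the commuting pair $(A_j, B_j)$. A side issue worth flagging is invertibility: the argument uses that $A_1$, $A_2$ and $\mathbb{I} - \tfrac{1}{4}\mathbb{E}_2 \mathbb{E}_1$ are all invertible. The first two are the standing hypothesis that each factor admits a Stratonovich representation, while invertibility of $\mathbb{I} - \tfrac{1}{4}\mathbb{E}_2 \mathbb{E}_1$ is equivalent to invertibility of $\mathbb{I} + \mathbb{V}_{\mathrm{series}}$ and is precisely the condition for the series product itself to possess a Stratonovich form.
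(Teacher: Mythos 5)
Your proof is correct and follows essentially the same route as the paper: both invert the Cayley transform $\mathbb{V}=(\mathbb{I}-\tfrac{i}{2}\mathbb{E})(\mathbb{I}+\tfrac{i}{2}\mathbb{E})^{-1}$ applied to the product $\mathbb{V}_{2}\mathbb{V}_{1}$, with your two-sided multiplication by $A_{2}$ and $A_{1}$ being a tidier, symmetric reorganization of the paper's manipulation of $V(F_{3})=V_{2}V(F_{1})$ followed by the substitution $F_{2}=V^{-1}(V_{2})$. Your closing remark that invertibility of $\mathbb{I}-\tfrac{1}{4}\mathbb{E}_{2}\mathbb{E}_{1}$ is exactly invertibility of $\mathbb{I}+\mathbb{V}_{\mathrm{series}}$ is a correct identification of the hypothesis the paper leaves implicit.
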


\begin{proof}
Let us begin by noting that the Cayley transformation $F\mapsto V\left(
F\right) =\frac{I-F}{I+F}$ is a map from matrices $\left\{ F:I+F\text{ is
invertible}\right\} $ to $\left\{ V:I+V\text{ is invertible}\right\} $.
Indeed, if $V=\frac{I-F}{I+F}$ then $F=\frac{I-V}{I+V}$, so $V^{-1}\left(
\cdot \right) \equiv V\left( \cdot \right) $. We wish to solve $V\left(
F_{3}\right) =V\left( F_{2}\right) V\left( F_{1}\right) $ for $F_{3}$ given $%
F_{1}$ and $F_{2}$. The identity $V\left( F_{3}\right) =V_{2}V\left(
F_{1}\right) $ can be rearranged to give
\begin{equation*}
F_{3}=\left[ \left( I-V_{2}\right) +\left( I+V_{2}\right) F_{1}\right] \left[
\left( I+V_{2}\right) +\left( I-V_{2}\right) F_{1}\right] ^{-1}
\end{equation*}
and setting $F_{2}=V^{-1}\left( V_{2}\right) \equiv \frac{I-V_{2}}{I+V_{2}}$
leads to
\begin{equation*}
F_{3}=\left( I+V_{2}\right) \left( F_{1}+F_{2}\right) \left(
I+F_{2}F_{1}\right) ^{-1}\left( I+V_{2}\right) ^{-1}
\end{equation*}
and noting that $I+V_{2}=2\frac{I}{I+F_{2}}$ gives the result.
\end{proof}

\subsection{Adjacency Matrices}
The adjacency operator corresponding to a given permutation $\sigma $ on a
set of $n$ labels will be denoted by $\eta \left( \sigma \right) $. That is,
\begin{equation*}
\left[ \eta \left( \sigma \right) \right] _{jk}=\left\{ 
\begin{array}{cc}
1, & j=\sigma \left( k\right) ; \\ 
0, & \text{otherwise.}
\end{array}
\right. 
\end{equation*}
A natural question is when does there exist a $E_{\mathsf{kk}}$ such that $%
\eta \left( \sigma \right) =\frac{I_{\mathsf{k}}-\frac{i}{2}E_{\mathsf{kk}}}{%
I_{\mathsf{k}}+\frac{i}{2}E_{\mathsf{kk}}}$. For this to be possible we need
that $I_{\mathsf{k}}-\eta \left( \sigma \right) $ is invertible, so that $E_{\mathsf{kk}}
\equiv \frac{2}{i} \frac{I_n-\eta (\sigma )}{I_n +\eta (\sigma )}$.

This mapping $:\sigma \mapsto \eta \left( \sigma \right) $ from the set of
permutations on $n$ labels to the $n\times n$ matrices is a reducible
representation of the permutation group: indeed each $\eta \left( \sigma
\right) $ has eigenvalue unity for the eigenvector $\left[ 1,\cdots ,1\right]
^{\top }$ and this is a nontrivial invariant subspace. More generally, we
have that the spectrum (including explicit degeneracies) of $\eta \left( \sigma \right) $ is the multi-set $%
\cup _{k\geq 1}\left\{ k\text{-th roots of unity}\right\} ^{n_{k}\left(
\sigma \right) }$ where $n_{k}\left( \sigma \right) $ counts the number of
cycles of length $k$ in the permutation. In particular, we note that $-1$ is
not in the spectrum if and only if there are no even cycles in $\sigma $
\cite{Diaconis}. 

As a result, we have that $I_{n}+ \eta \left( \sigma \right) $ does not have zero as an eigenvalue
if and only if $\sigma $ has no even cycles. This is the condition for $\eta (\sigma )$ to be expressed as a
Cayley transorm of some $2\times 2$ matrix $E_{\mathsf{kk}}$.
For instance, the swap gate
\begin{equation*}
\eta =\left[ 
\begin{array}{cc}
0 & 1 \\ 
1 & 0
\end{array}
\right] 
\end{equation*}
corresponding to the cycle $\left( 12\right) $ is the simplest adjacency matrix that cannot be expressed in this way. 

\section{Conclusions}
The Stratonovich form of the quantum stochastic calculus has the advantage of revealing a Hamiltonian structure. This is readily seen in the rule for concatenating components to get $\Upsilon (t)$ in (\ref{eq:S_parallel}) - and so (\ref{eq:E_parallel}). The fact that the feedback reduction rule has the direct form of a Schur complement, that is equation (\ref{eq:E_fb}), is however an unexpected feature. We might mention
that the Schur complement has previously emerged as the appropriate tool in adiabatic elimination results in the SLH formalism
\cite{BSvH,GNW,GN}.

In the paper, we have restricted our attention to just the proof of the mathematical form of the feedback reduced Stratonovich matrix.
This is done at the very broad level of generality offered by the SLH formalism, and have deferred the application to specific models
for a latter publication. It should be mentioned that situations such as completely closed feedback loops served as a motivating problem for this publication as the Schur complement form already appears in a restricted sense, and strongly suggestive of the general result presented here in Theorem 1. We have also restricted to the vacuum case for the input fields. The issue of introducing time delays into the
feedback connections has also been ignored - we assume the validity of an instantaneous feedback limit - however we note that there has
been some promising developments in this direction for linear quantum systems \cite{Tabak}.

While the Stratonovich form of the feedback reduction rule is much simpler that the It\={o} form (\ref{eq:SLH_feedback}), it is not
true that it is universally easier to work with. For instance, the series product formula is a corollary to the concatenation rule 
(\ref{eq:SLH_parallel}) and the feedback rule (\ref{eq:SLH_feedback}), and so must be derivable from the two equivalent formulations in terms of Stratonovich matrices of coefficients. However this derivation is very involved. It is therefore the case that certain operations, such as putting systems in series are better handled with the It\={o}, or SLH, form while other operations, such as feedback reduction may be better handled in the Stratonovich form. The situation is not unlike classical circuit theory where one chooses judicially the form of the immittances: namely impedances for components is series in a network, and admittances (their inverses) for components placed in parallel in a network. While analogy is not exact, it does suggest that a hybrid use of the It\={o} and Stratonovich rules may be very useful for calculating the SLH characteristics of complex quantum feedback networks.

\section{Acknowledgements}
The paper was motivated by discussions at the workshop Quantum and Nano-Control at the Institute of Mathematics and Its Applications in Minnesota during April 2016, and the kind support and hospitality of the IMA is greatly acknowledged. In particular, the author wishes to thank Professors Ian Petersen, Matthew James and Hideo Mabuchi for valuable conversations and comments.

\end{document}